\documentclass[submission,copyright,creativecommons]{eptcs}

\bibliographystyle{eptcs}
 
\usepackage{times}
\usepackage{breakurl}             

\usepackage{graphicx}
\graphicspath{{/home/awells/Documents/probabilistic_verification/figs}}
\usepackage{xcolor}

\usepackage[algoruled,vlined,linesnumbered]{algorithm2e}

\usepackage{amsmath}
\usepackage{amssymb}

\usepackage{tikz}
\usetikzlibrary{automata,positioning,angles,quotes}

\usepackage{microtype}

\usepackage{caption}
\usepackage{subcaption}

\usepackage{wrapfig}


\usepackage{amsthm}
\newtheorem{theorem}{Theorem}
\newtheorem{definition}{Definition}
\newtheorem{lemma}{Lemma}
\newtheorem{corollary}{Corollary}

\newtheorem{problem}{Problem}



\usepackage[title]{appendix}

\newcommand{\ltl}{\textsc{LTL}\xspace}
\newcommand{\ltlf}{$\textsc{LTL}_f$\xspace}
\newcommand{\ltlphi}{\varphi}
\newcommand{\ltlfphi}{\phi}
\newcommand{\M}{\mathcal{M}}
\newcommand{\term}{\text{term}}
\newcommand{\init}{\text{init}}
\newcommand{\A}{\mathcal{A}}
\renewcommand{\L}{L}
\newcommand{\Act}{A}
\newcommand{\mypath}{w}
\newcommand{\trace}{\rho}

\usepackage[algoruled,vlined,linesnumbered]{algorithm2e}

\hyphenation{}

\usepackage[final]{changes}

\pdfinfo{
   /Author (Andrew Wells)
   /Title  (LTLf Synthesis on Probabilistic Systems)
   /CreationDate (D:20101201120000)
   /Subject (Robots)
   /Keywords (Robots;Overlords)
}

\newcommand{\shrinkb}{\def\baselinestretch{0.995}\large\normalsize}
\begin{document}
\shrinkb

\title{\ltlf Synthesis on Probabilistic Systems}



\author{Andrew M. Wells
\institute{Rice University, Houston, Texas}
\email{andrew.wells@rice.edu}
\and
Morteza Lahijanian
\institute{University of Colorado, Boulder, Colorado}
\email{morteza.lahijanian@colorado.edu}
\and
Lydia E. Kavraki
\institute{Rice University, Houston, Texas}
\email{kavraki@rice.edu}
\and
Moshe Y. Vardi
\institute{Rice University, Houston, Texas}
\email{vardi@rice.edu}
}
\def\titlerunning{\ltlf Synthesis on Probabilistic Systems}
\def\authorrunning{Wells et al.}



\maketitle


\begin{abstract}
 Many systems are naturally modeled as Markov Decision Processes (MDPs), combining probabilities and strategic actions. Given a model of a system as an MDP and some logical specification of system behavior, the goal of synthesis is to find a policy that maximizes the probability of achieving this behavior.   A popular choice for defining behaviors is Linear Temporal Logic (\ltl). Policy synthesis on MDPs for properties specified in \ltl  has been well studied. \ltl, however, is defined over infinite traces, while many properties of interest are inherently finite. Linear Temporal Logic over finite traces (\ltlf) has been used to express such properties, but no tools exist to solve policy synthesis for MDP behaviors given finite-trace properties.  We present two algorithms for solving this synthesis problem: the first via reduction of \ltlf to \ltl and the second using native tools for \ltlf. We compare the scalability of these two approaches for synthesis and show that the native approach offers better scalability compared to existing automaton generation tools for \ltl.
\end{abstract}

\section{Introduction}
\label{sec:introduction}



Many real-world systems are stochastic in nature.  They evolve in the world according to action (control) decisions and the uncertainty embedded in the execution of those actions, resulting in stochastic behavior. \textit{Formal synthesis} studies how the system should choose actions so that it can increase the chances of achieving a desirable behavior.  To allow such reasoning, \textit{Markov Decision Processes} (MDPs) are typically used to model these systems since MDPs effectively capture sequential decision-making and probabilistic evolutions \cite{baier2008principles}.  The desired behavior is expressed in a formal language, which yields expressive and unambiguous specifications.  Temporal logics are a common choice for this language 
since they allow a combination of temporal and boolean reasoning over the behavior of the system.  Most temporal logics are interpreted over behaviors with infinite time durations, but some behaviors are inherently finite and can be expressed more intuitively and more practically using a temporal logic with finite semantics \cite{He:IROS:2017,he2019efficient}.  This work investigates policy synthesis on MDPs for specifications expressed in a formal language called \textit{Linear Temporal Logic over finite traces} (\ltlf) \cite{de2013linear}, which reasons over system behaviors with finite horizons.


A popular specification language in formal verification is Linear Temporal Logic (\ltl) \cite{pnueli1977temporal}.
\ltl provides a natural description of temporal properties over an infinite trace.  \ltl can express properties such as order, e.g., ``first $a$ and next eventually $b$,'' and lack of starvation, e.g., ``globally eventually resources are granted.''  While infinite-trace properties are essential for reasoning about many systems, other systems require finite-trace properties \cite{He:IROS:2017,he2019efficient}, for example robot planning for finite behaviors. If a robot is to build an object, the interest is in the finite rather than the infinite traces that accomplish this task.  In such cases \ltlf \cite{de2013linear} is a suitable alternative to \ltl, cf.~\cite{he2019efficient}.

A number of recent studies has focused on developing efficient frameworks for solving \ltlf-reasoning problems, e.g., \cite{de2013linear,de2015synthesis,zhu2017symbolic,He:IROS:2017,Li:AAAI:2019,he2019efficient}.  De Giacomo and Vardi \cite{de2013linear} presented a translation of \ltlf to \ltl, implying that existing tools for \ltl satisfiability and synthesis can be used for \ltlf with suitable transformations.  In \cite{de2015synthesis}, the problem of reactive synthesis from \ltlf specifications that contain system (controllable) and environment (uncontrollable) variables is considered.  
The work shows that this problem reduces to strategy synthesis in a two-player automaton game, which is a 2EXPTIME-complete problem.  To enable practical solutions to this problem, \cite{zhu2017symbolic} introduces a symbolic approach to the \ltlf synthesis problem.  The applications of those results in the field of robot planning are studied in \cite{He:IROS:2017,He:RAL:2018,he2019efficient}. In \cite{Li:AAAI:2019}, the fundamental problem of \ltlf satisfiability checking is studied through a SAT-based approach. The results of those studies show that more scalable tools can be built by treating \ltlf \emph{natively}, rather than reducing to \ltl, even though \ltl reasoning is a more mature and extensively studied domain.  The underlying assumption in all those works is that the system is either purely deterministic or purely nondeterministic.  Hence, it is natural to ask whether similar results can be obtained for probabilistic systems. That is, do native \ltlf techniques outperform reasoning by reduction to \ltl?

Formal synthesis on probabilistic systems has been extensively studied in the formal-verification literature.  In those studies, MDPs are a de facto modeling tool for the probabilistic system, and \ltl and PCTL (probabilistic computation temporal logic) \cite{baier2008principles} are the specification languages of choice.  For \ltl synthesis, the approach is based on first translating the specification to a \textit{deterministic Rabin automaton} (DRA) and then solving a stochastic shortest path problem \cite{Var85b,de1997formal} on the product of the DRA with the MDP.  Tools such as PRISM \cite{kwiatowska2011prism} can solve this \ltl synthesis problem efficiently with their symbolic engine.  \ltlf synthesis for probabilistic systems, however, has not been studied, and it is not clear whether the same tools can be extended for \ltlf reasoning with a similar efficiency.

Assigning rewards based on temporal goals has also been studied in the context of planning. In \cite{thiebaux2006decision}, a logic similar to \ltlf, \$FLTL, is considered; however, this logic cannot express properties such as ``Eventually.'' In \cite{bacchus1996rewarding,bacchus1997structured}, PastLTL is used to describe finite properties associated with rewards. PastLTL and \ltlf naturally express different properties (see \cite{thiebaux2006decision}). In \cite{camacho2017nonmarkovian,brafman2018ltlf}, rewards can be attached to \ltlf or \textit{linear dynamic logic on finite traces} (LDL$_f$) \cite{de2013linear} formulas. In all of these MDP planning works, the goal is to maximize rewards rather than to study the behavior of the MDP. These approaches use approximations that give 
lower bounds and \replaced{converge}{converging} only in the limit.
Thus, they cannot necessarily give a negative answer to a decision query about a synthesis problem (e.g., does a policy with at least 95\% probability into success exist?). 

In this work, we present the problem of \ltlf synthesis for 
MDPs. 
In approaching this problem, we specifically seek to answer the empirical question of which approach is more efficient; an approach based on the mature and well-studied \ltl synthesis or an approach based \added{on} the unique properties of \ltlf itself. The answer to this question can have a broader impact on the employment of formal methods for probabilistic systems. Hence, we introduce two solutions to this problem.  The first one is based on a reduction to an \ltl synthesis problem, and the second one is a native approach. For the first approach, we show a translation of the \ltlf specification to \ltl and a corresponding augmentation of the MDP that allows us to use standard tools for \ltl synthesis. For the second approach, we use specialized tools to obtain an automaton that we input to standard tools in order to solve the problem. 

Even though both approaches have the same theoretical complexity bound \cite{baier2008principles,de2015synthesis}, 
we demonstrate that the native approach scales better than the translation to \ltl through a series of benchmarking case studies. 
Our code and examples are available on GitHub \cite{wells2020github}.

\section{Problem Formulation}
\label{sec:problem_formulation}

Our goal is to synthesize a policy for an MDP such that the probability of satisfying a given \ltlf property is maximized.  First we introduce the formalisms needed to define this problem. \added{(For a detailed treatment, see \cite{baier2008principles}).}  In Section \ref{sec:Markov_models}, we introduce Markov processes and define the labeling of a path on these processes and the probability measure associated with a set of paths. Next, we introduce \ltl and the finite-trace version \ltlf in Section \ref{sec:ltl}. We then define when a path on an MDP \deleted{to} satisfies an \ltl or \ltlf formula as well as the probability of satisfaction in Section \ref{sec:ltl_sat}. Finally, we give our formal problem definition in Section \ref{sec:problem_statement}.



\subsection{Markov Processes}
\label{sec:Markov_models}
Markov processes are frequently used to model systems that exhibit stochastic behavior.
While this paper deals with MDPs, it is also important to define \textit{Discrete-Time Markov Chains} (DTMCs) as they are needed to define probability measures.

\begin{definition}[DTMC]
	\label{def:DTMC}
	A labeled \emph{Discrete-Time Markov Chain} (DTMC) is a tuple $\mathcal{D} = (S, P, s_{\init}, AP, \L)$, where $S$ is a countable set of states, $s_{\init} \in S$ is the initial state, $AP$ is a finite set of atomic propositions, $\L: S \to 2^{AP}$ is a labeling function, and $P: S \times S \to [0,1]$ is a transition probability function where $\sum_{s'\in S} P(s,s') = 1$ for all $s \in S$.

\end{definition}






An execution of a DTMC is given by a path as defined below.
\begin{definition}[Path]
	A \emph{path} $\mypath$ through a DTMC is an infinite sequence of states:
	\begin{equation*}
		\mypath = s_0 s_1 \ldots s_n \ldots,
	\text{ \hspace{1em} such that \hspace{1em} }
		 P(s_i, s_{i+1}) > 0 \quad \forall i \geq 0.
	\end{equation*}
\end{definition}

\noindent
$Paths(s)$ is the set of paths starting in $s$. A finite path is \replaced{a path}{one} with a last state $s_n$. $Paths_{fin}(s)$ is the set of all finite paths starting in $s$. For every path $\mypath \in Paths(s)$, $pre(\mypath)$ denotes the set of prefixes of $\mypath$ (similarly for $\mypath \in Paths_{fin}(s)$).

A trace or labeling of a path is the sequence of labels of the states in the path.
\begin{definition}[Labeling of a Path]
	A \emph{labeling} (also referred to as \emph{valuation} or \emph{trace}) of an infinite path $\mypath = s_0 s_1 \ldots s_n \ldots$ is the sequence $\L(s_0) \L(s_1) \ldots \L(s_n) \ldots$
	The labeling of a finite path is defined analogously. We use $\L(\mypath)$ to denote the labeling of $\mypath$.
\end{definition}

To reason over the paths of a DTMC probabilistically, we need to define a probability space with a probability measure over infinite paths.  To this end, we first define cylinder sets that extend a finite path to a set of infinite paths.
\begin{definition}[Cylinder Set for DTMC]
	The cylinder set for some finite path $\mypath \in Paths_{fin}(s_{\init})$, denoted by $Cyl(\mypath)$, is the set of all infinite paths that share $\mypath$ as a prefix:
	\begin{equation}
		Cyl(\mypath) = \{ \mypath' \in Paths(s_{\init}) \mid \mypath \in pre(\mypath') \}.
	\end{equation}
\end{definition}

\begin{definition}[Probability Measure over Paths of a DTMC]
\label{def:probability_of_paths_dtmc}
For the probability space $(\Omega, \mathcal{E}, Pr)$, where sample space $\Omega=Paths(s_{\init})$, event space $\mathcal{E}$ is the smallest $\sigma$-algebra on $\Omega$ containing the cylinder sets of all finite paths (i.e., $Cyl(\mypath) \in \mathcal{E}$ for all $\mypath \in Paths_{fin}(s_{\init})$), the probability measure $Pr$ is defined as:
	\begin{equation}
		Pr(Cyl(s_0 \ldots s_n)) = \prod_{0 \leq i < n}P(s_i, s_{i+1}),
	\end{equation}
	where $s_0 = s_{\init}$. \added{This probability measure is unique per Carath\'eodory's extension theorem}. We define $Pr(Cyl(s_{\init})) = 1$. 
\end{definition}


Some systems exhibit not only probabilistic behavior but also non-deterministic behavior. These systems are typically modeled as Markov Decision Processes (MDPs). MDPs extend the definition of DTMCs by allowing choices of actions at each state.

\begin{definition}[MDP]
	\label{def:MDP}
	A labeled \emph{Markov Decision Process} (MDP) is a tuple: $\mathcal{M} = (S, \Act, P, s_{\init}, AP, \L)$, where $s_{init}$, $AP$ and $\L$ are as in Def.~\ref{def:DTMC} and:
	\begin{itemize}
    	\item $S$ is a finite set of states;
		\item $\Act$ is a finite set of actions, and $\Act(s) \subseteq \Act$ denotes the set of actions enabled at state $s \in S$;
		\item $P: S \times \Act \times S \to [0,1]$ is the transition probability function where $\sum_{s' \in S} P(s,a,s')=1$ for all $s\in S$ and $a \in \Act(s)$.
	\end{itemize}
\end{definition}

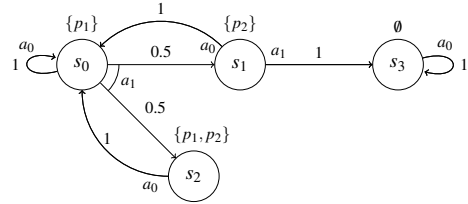
\begin{wrapfigure}{r}{0.4\textwidth}
\vspace{-5mm}
	\centering
	\scalebox{0.7}{
	\begin{tikzpicture}[auto,node distance=3cm,on grid, auto]
		\tikzstyle{round}=[thick,draw=black,circle]

		\node[state, label={\footnotesize $\{ p_1 \}$}] (s0) {$s_0$};
		\node[state, label={\footnotesize $\{ p_2 \}$}] (s1) [right=of s0] {$s_1$};
		\node[state, label={\footnotesize $\quad\{ p_1, p_2 \}$}] (s2) [below right=of s0] {$s_2$};
		\node[state, label={\footnotesize $\emptyset$}] (s3) [right=of s1] {$s_3$};		

		\path[->] 	(s0) edge node {\footnotesize 0.5} (s1)
					(s0) edge node {\footnotesize 0.5} (s2)
					(s0) edge [loop left] node {\footnotesize 1} ()
    				(s0) edge [loop left] node [above=1mm] {\footnotesize $a_0$} ();
		\path[->] 	(s1) edge [bend right=45] node [above] {\footnotesize 1} (s0)
					(s1) edge [bend right=45] node [very near start, below=0mm] {\footnotesize $a_0$} (s0)
				 	(s1) edge node [above] {\footnotesize 1} (s3)
				 	(s1) edge node [very near start, above] {\footnotesize $a_1$} (s3);
		\path[->] 	(s2) edge [bend left=45] node [above] {\footnotesize 1} (s0)
				 	(s2) edge [bend left=45] node [very near start, below=0mm] {\footnotesize $a_0$} (s0);
		\path[->] 	(s3) edge [loop right] node {\footnotesize 1} ()
					(s3) edge [loop right] node [near start, above] {\footnotesize $a_0$} ();

		\path pic[draw, angle radius=7mm,"\vphantom{g} \footnotesize{$a_1$}",angle eccentricity=1.3] {angle = s2--s0--s1};
		
	\end{tikzpicture}
	}
	\caption{Example MDP.}
  \label{fig:exampleMDP}
\end{wrapfigure}

\textit{Example 1.} An example of an MDP is \replaced{shown}{show} in \autoref{fig:exampleMDP}.  Actions ($A = \{a_0,a_1\}$) and probabilities are shown as edge labels. State names are within each state and state labels are above each state.


The notion of path can be straightforwardly extended from DTMCs to MDPs.

\begin{definition}[Path through MDP]
	A \emph{path} $\mypath$ through an MDP is a state followed by a sequence of action-state pairs:
	\begin{equation*}
		\mypath = s_0 \langle a_0, s_1 \rangle \langle a_1, s_2 \rangle \ldots \langle a_{n-1}, s_n \rangle \ldots
	\end{equation*}
	such that
	$s_0 = s_\init$, $a_i \in \Act(s_i)$, and $P(s_i, a_i, s_{i+1}) > 0$ for all $i\geq 0$.
	$Paths(s)$, $Paths_{fin}(s)$ and $pre(\mypath)$ are defined analogously to the DTMC case.

\end{definition}

A \emph{policy} (also known as an \emph{adversary} or \emph{strategy}) resolves the choice of actions. This induces a (possibly infinite) DTMC.

\begin{definition}[Policy]
	A \emph{policy} $ \pi: Paths_{fin} \to \Act$ maps every finite path $\mypath = s_0 \langle a_0,s_1 \rangle \langle a_1, s_2 \rangle \ldots \langle a_{n-1}, s_n \rangle $ to an element of $\Act(s_n)$, i.e., $\pi(\mypath) \in \Act(s_n)$.\footnote{Here, we focus on  deterministic policies as they are sufficient for optimality of \ltl (and \ltlf) properties on MDPs \cite{baier2008principles}.}
\end{definition}

\noindent
The set of all policies is $\Pi$. A policy is \emph{stationary} if $\pi(\mypath)$ only depends on the most recent state $s_n$ of $\mypath$ denoted by $last(\mypath)$. Under policy $\pi$ the action choices on $\M$ are determined. This gives us a DTMC whose states correspond to the finite paths of the MDP. We call this the DTMC induced on $\mathcal{M}$ by policy $\pi$ and denote this by $\mathcal{D}^\pi$.  $Paths^\pi(s)$ is shorthand for $Paths(s)$ of $\mathcal{D}^\pi$. 
Therefore, the probability of the paths of the MDP under $\pi$ are defined according to $\mathcal{D}^\pi$, whose probability space includes a probability measure $Pr^\pi$ over infinite paths via cylinder sets that extend a finite path to a set of infinite paths. See \cite{baier2008principles} for details.


\subsection{Linear Temporal Logic}
\label{sec:ltl}


Linear temporal logic (\ltl) is a popular formalism used to specify temporal properties. Here, we are interested in \ltl interpreted over finite traces, but we must still define \ltl as we use a reduction to \ltl as one of our two approaches to synthesis.

\begin{definition}[\ltl Syntax]
	An $\ltl$ formula is built from a set of propositional symbols $Prop$ and is closed under the boolean connectives as well as the ``next'' operator $X$ and the ``until'' operator $U$: $$\ltlphi ::= \top   \mid   p   \mid   (\neg \ltlphi)   \mid   (\ltlphi_1 \wedge \ltlphi_2)   \mid   (X \ltlphi)  \mid  (\ltlphi_1 U \ltlphi_2),$$ where $p \in Prop.$  
\end{definition}
The common temporal operators ``eventually'' ($F$) and ``globally'' ($G$) are defined as: $F \, \ltlphi = \top \, U \, \ltlphi$ and $G \, \ltlphi = \neg F\, \neg \ltlphi$.
The semantics of \ltl are defined over infinite traces (for full definition, see \cite{pnueli1977temporal}).  An \ltl formula $\ltlphi$ defines an $\omega$-regular language $\mathcal{L}(\ltlphi)$ over alphabet $2^{Prop}$.
\deleted{i.e., $\mathcal{L(\ltlphi)} = \{\rho \in (2^{Prop})^\omega \mid \rho \models \ltlphi\}.$}

Next we define \ltlf.  To distinguish between the formulas of the two logics, we use $\ltlphi$ for \ltl  and $\ltlfphi$ for \ltlf formulas.

\begin{definition}[\ltlf Syntax \& Semantics]
	An \ltlf formula has identical syntax to \ltl, but the semantics is defined over finite traces. 
	Let $|\rho|$ and $\rho_i$ denote the length of trace $\rho$ and the symbol in the $i^{th}$ position in $\rho$, respectively, and $\rho, i \models \ltlfphi$ is read as: ``the $i^{th}$ step of trace $\rho$ is a model of $\ltlfphi$.'' 
	Then,
	\begin{itemize}
		\item $\rho , i \models \top;$
		\item $\rho , i \models p$ iff $p \in \rho_i;$
		\item $\rho , i \models \neg \ltlfphi$ iff $\rho, i \not \models \ltlfphi;$
		\item $\rho , i \models \ltlfphi_1 \wedge \ltlfphi_2$ iff $\rho, i \models \ltlfphi_1$ and $\rho, i \models \ltlfphi_2;$
		\item $\rho , i \models X  \ltlfphi$ iff $|\rho| > i+1$ and $\rho, i+1 \models \ltlfphi;$
		\item $\rho , i \models \ltlfphi_1 U \ltlfphi_2$ iff $\exists j$ s.t. $ i \leq j < |\rho|$ and $\rho, j \models \ltlfphi_2$ and $\forall k$, $i \leq k < j$, \; $\rho, k \models \ltlfphi_1$.
	\end{itemize}
\end{definition}
\noindent
We say finite trace $\rho$ satisfies formula $\ltlfphi$, denoted by $\rho \models \ltlfphi$, iff $\rho,0 \models \ltlfphi$.
An \ltlf formula $\ltlfphi$ defines a language $\mathcal{L}(\ltlfphi)$ over the alphabet $2^{Prop}$. $\mathcal{L}(\ltlfphi)$ is a star-free regular language \cite{de2013linear}.


\subsection{Satisfaction of Temporal Logic Specification}
\label{sec:ltl_sat}

Here, we define what it means for an MDP to satisfy an \ltl or \ltlf formula.  

\begin{definition}[Path satisfying \ltl]
	For a pair ($\mathcal{M}$, $\ltlphi$) of an MDP and an \ltl formula where the atomic propositions of $\mathcal{M}$ match the propositions of $\ltlphi$ (i.e., $Prop = AP$), we say that an infinite path $\mypath$ on $\mathcal{M}$ satisfies specification $\ltlphi$ if the labeling of $\mypath$ is in the language of $\ltlphi$, i.e., $\L(\mypath) \in \mathcal{L}(\ltlphi)$.
\end{definition}

\noindent
Following \cite{zhu2017symbolic}, we define finite satisfaction (of an \ltlf formula) as follows.

\begin{definition}[Path satisfying \ltlf]
	For a pair ($\mathcal{M}$, $\ltlfphi$) of an MDP and an \ltlf formula where the atomic propositions of $\mathcal{M}$ match the propositions of $\ltlfphi$ (i.e., $Prop = AP$), we say that a (possibly finite) path $\mypath$ of $\mathcal{M}$ satisfies specification $\ltlfphi$ if at least one prefix of $\mypath$ is in the language of $\ltlfphi$, i.e., 
	\begin{equation}
		\mypath \models \ltlfphi \quad \Leftrightarrow \quad \exists \mypath' \in pre(\mypath) \; s.t. \; \L(\mypath') \in \mathcal{L}(\ltlfphi).
	\end{equation}
\end{definition}

\noindent
Intuitively, this corresponds to a system that can declare its execution complete after satisfying its goals. \ltlf is suitable for specifications that are to be completed in finite time.

\begin{definition}[Probability of \ltlf satisfaction]
	The probability of satisfying an \ltlf property $\ltlfphi$ in $\mathcal{M}$ under policy $\pi$ is $Pr(\mathcal{M}^\pi \models \ltlfphi) = Pr^\pi(\mypath \in Paths^\pi(s_{\init}) \mid \mypath \models \ltlfphi)$.
\end{definition}

\subsection{Problem Statement}
\label{sec:problem_statement}
We formalize the problem of \ltlf synthesis on MDPs as:
\begin{problem}[\ltlf synthesis on MDPs]
\label{problem}
	Given MDP $\mathcal{M}$ and an \ltlf formula $\ltlfphi$, compute a policy $\pi^*$ that maximizes the probability of satisfying $\ltlfphi$, i.e.,
$$\pi^* = \arg \max_{\pi \in \Pi} Pr(\mathcal{M}^\pi \models \ltlfphi),$$	
as well as this probability, i.e., $Pr(\M^{\pi^*} \models \ltlfphi)$.
\end{problem}


\section{Synthesis Algorithms}
\label{sec:synthesis_alg}

We introduce two approaches to \ltlf policy synthesis on MDPs.  The first approach is based on reduction to classical \ltl policy synthesis on MDPs.  The second approach is through a translation of \ltlf formulas to first order logic formulas, which can be translated to a symbolic deterministic automaton.  In this section, we detail these two algorithms, and in Section~\ref{sec:evaluation}, we show that the second approach scales better than the classical \ltl approach.

\subsection{Reduction to \ltl Synthesis}
\label{sec:ltlf2ltl_synthesis}

We can reduce the problem of \ltlf policy synthesis on MDPs to a classical \ltl policy synthesis.  The algorithm consists of two main steps: (1) construction of an MDP $\M'$ from $\M$ by augmenting it with an additional state and atomic proposition, and (2) translation of \ltlf formula  $\ltlfphi$ on the labels of $\M$ to its equivalent \ltl formula $\ltlphi$ on the labels of $\M'$.

\subsubsection{Augmented MDP}
\label{sec:augMDP}

Recall that the semantics of \ltlf formulas is over finite traces whereas the interpretation of \ltl formulas is over infinite traces.  In order to reduce the \ltlf synthesis problem to an \ltl one, we need to 
be able to capture the finite paths (traces) of $\M$ that satisfy $\ltlfphi$ and extend them to infinite paths (traces).  Specifically, we need those satisfying finite paths that contain no $\ltlfphi$-satisfying proper prefixes, i.e., satisfy $\ltlfphi$ for the first time.  To this end, we allow the \replaced{system}{environment} (policy) to decide when to ``terminate.''  We view the system to be ``alive'' until termination, at which point it is no longer alive.  Then, we define an \ltl formula that requires the system to be alive while it has not satisfied $\ltlfphi$ and to terminate after satisfying $\ltlfphi$.


To this end, we augment MDP $\M$ with a terminal action and state and an atomic proposition $alive$.  Formally, we construct MDP $\M'=(S', \Act', P', s'_{\init}, AP', \L')$, where $S' = S \cup \{s_\term\}$, $\Act' = \Act \cup \{a_\term\}$, $s'_\init = s_\init$, $AP' = AP \cup \{alive\}$,
\begin{equation}
    \label{eq:aug_mdp}
    \begin{split}
    	\Act'(s) = 
    		\begin{cases}
    			\Act(s) \cup \{a_\term\}	&	\text{if } s \neq s_\term\\
    			\{a_\term\}					&	\text{if } s = s_\term
    		\end{cases}
    		, \quad \quad
    			\L'(s) = 
    		\begin{cases}
    			\L(s) \cup \{alive\}	&	\text{if } s \neq s_\term \\
    			\emptyset				&	\text{if } s = s_\term
    		\end{cases}
    		,\\
    	P'(s,a,s') = 
    		\begin{cases}
    			P(s,a,s')	&	\text{if } s \in S, \: a \in \Act, s' \in S \\
    			0			&	\text{if } s \in S, \: a \in \Act, s' = s_\term\\
    			1			&	\text{if } s \in S', a=a_\term, s' = s_\term
    		\end{cases}
    		. \qquad\qquad
    \end{split}
\end{equation}

In this MDP, the system can decide to terminate by taking action $a_\term$, in which case it transitions to state $s_\term$ with probability one and remains there forever.  The labeling of the corresponding path includes the atomic proposition $alive$ at every time step until $s_\term$ is visited and is empty thereafter.

\begin{wrapfigure}{r}{0.55\textwidth}
    \vspace{-5mm}
	\centering
	\scalebox{0.78}{
	\begin{tikzpicture}[auto,node distance=3cm,on grid, auto]
		\tikzstyle{round}=[thick,draw=black,circle]

		\node[state, label={\footnotesize $\{ p_1, alive \}$}] (s0) {$s_0$};
		\node[state, label={\footnotesize $\quad \{ p_2, alive \}$}] (s1) [right=of s0] {$s_1$};
		\node[state, label={\footnotesize $\quad\quad\quad\{ p_1, p_2, alive \}$}] (s2) [below right=of s0] {$s_2$};
		\node[state, label={\footnotesize $\{alive\}$}] (s3) [right=of s1] {$s_3$};		
		\node[state, label={$\emptyset$}] (sterm) [right=of s2] {$s_\term$};		

		\path[->] 	(s0) edge node {\footnotesize 0.5} (s1)
					(s0) edge node {\footnotesize 0.5} (s2)
					(s0) edge [loop left] node {\footnotesize 1} ()
    				(s0) edge [loop left] node [above=1mm] {\footnotesize $a_0$} ()
    				(s0) edge [dashed, bend right=100] node [very near start, left=0mm] {\footnotesize $a_\term$} (sterm);
		\path[->] 	(s1) edge [bend right=45] node [above] {\footnotesize 1} (s0)
					(s1) edge [bend right=45] node [very near start, below=0mm] {\footnotesize $a_0$} (s0)
				 	(s1) edge node [above] {\footnotesize 1} (s3)
				 	(s1) edge node [very near start, above] {\footnotesize $a_1$} (s3)
				 	(s1) edge [dashed] node [near start, right=1.5mm] {\footnotesize $a_\term$} (sterm);
		\path[->] 	(s2) edge [bend left=45] node [above] {\footnotesize 1} (s0)
				 	(s2) edge [bend left=45] node [very near start, below=0mm] {\footnotesize $a_0$} (s0)
					(s2) edge [dashed] node [near start, below=0mm] {\footnotesize $a_\term$} (sterm);
		\path[->] 	(s3) edge [loop right] node {\footnotesize 1} ()
					(s3) edge [loop right] node [near start, above] {\footnotesize $a_0$} ()
					(s3) edge [dashed] node [near start, right] {\footnotesize $a_\term$} (sterm);
		\path[->]	(sterm) edge [dashed, loop right] node [above=1mm] {\footnotesize $a_\term$} ();

		\path pic[draw, angle radius=7mm,"\vphantom{g} \footnotesize{$a_1$}",angle eccentricity=1.3] {angle = s2--s0--s1};
		
	\end{tikzpicture}
	}
	\vspace{-13mm}
	\caption{Augmented MDP constructed from $\M$ in Fig.~\ref{fig:exampleMDP}. 
	}
	\vspace{-4mm}
  \label{fig:augmentedMDP}

\vspace{-5mm}
\end{wrapfigure}
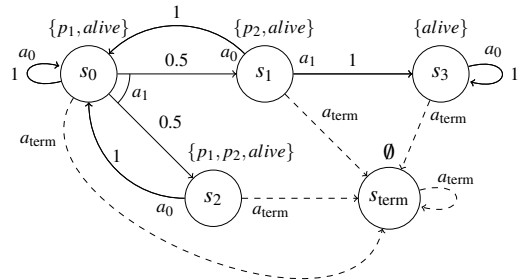

\textit{Example 2.} \autoref{fig:augmentedMDP} illustrates the augmented MDP $\M'$ constructed from the example MDP $\M$ in \autoref{fig:exampleMDP}.  State $s_\term$ along with the dashed edges and atomic proposition $alive$ are added to $\M$.  The dashed edges are enabled by action $a_\term$ and have transition probability of one. The label of $s_\term$ is the empty set, and the labels of the rest of the states contain $alive$.



With this augmentation, the system is able to terminate once a satisfying finite path is generated. We show that, for finite properties, the maximum probability of satisfaction in $\M$ equals the maximum probability of satisfaction in $\M'$ under a policy that enforces a visit to $s_\term$. 


Let $Paths_{fin,\ltlfphi} \subseteq Paths_{fin}(s_{\init})$ be a set of finite paths of interest in \added{arbitrary} MDP $\M$.  Denote the probability of satisfaction of infinite paths with prefixes from $Paths_{fin, \ltlfphi}$ under policy $\pi \in \Pi$ by:

\begin{equation}
	\label{eq:prob_interested_paths}
	Pr(\M^\pi \models Paths_{fin,\ltlfphi}) =   
	Pr^\pi (\mypath \in Paths^\pi (s_{\init}) \mid  pre(\mypath) \cap Paths_{fin,\ltlfphi} \neq \emptyset).
\end{equation}
The following lemma states  the equivalence of the maximum probability of satisfying this set of paths in $\M$ and $\M'$.

\begin{lemma}
    \label{lemma:pathProb}
    Let $Paths_{fin,\ltlfphi} \subseteq Paths_{fin}(s_{\init})$ be a set of finite paths of interest of MDP $\M$.
    Further, let $\M'$ be the augmented version of $\M$ and $\Pi'$ be the set of policies of $\M'$ that enforce a visit to state $s_\term$ (take action $a_{\term}$). Then, it holds that
    \begin{equation}
        \max_{\pi \in \Pi} Pr(\M^\pi \models Paths_{fin,\ltlfphi}) = \max_{\pi' \in \Pi'} Pr(\M'^{\pi'} \models Paths_{fin,\ltlfphi}).
    \end{equation}
\end{lemma}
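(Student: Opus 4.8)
The plan is to establish the equality by proving two inequalities, each via an explicit transformation of policies that preserves the probability of the event in \eqref{eq:prob_interested_paths}. The key observation is that $Paths_{fin,\ltlfphi}$ consists of finite paths of the \emph{original} MDP $\M$, so whether a path of $\M'$ has a prefix in $Paths_{fin,\ltlfphi}$ depends only on the portion of the path before $s_\term$ is ever visited (the $a_\term$ transition and the sink $s_\term$ contribute no states in $S$, hence cannot create or destroy a prefix in $Paths_{fin,\ltlfphi}$).

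For the $\leq$ direction, I would take an optimal (or near-optimal) policy $\pi$ for $\M$ and lift it to a policy $\pi'$ on $\M'$ as follows: $\pi'$ mimics $\pi$ on every finite path that stays within $S$, and once such a path has accumulated a prefix in $Paths_{fin,\ltlfphi}$ (equivalently, once $\ltlfphi$ has first been satisfied) $\pi'$ switches to taking $a_\term$, driving the run into $s_\term$ forever; on paths that have passed through $s_\term$, $\pi'$ plays $a_\term$. Since the original MDP may not produce a satisfying prefix along some branches, I also have $\pi'$ eventually take $a_\term$ on every run even if $\ltlfphi$ is never met (say, $\pi'$ never matters for such runs, but we must still certify $\pi' \in \Pi'$, i.e., it enforces a visit to $s_\term$ on \emph{all} runs). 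The resulting $\pi'$ is in $\Pi'$, and because the pre-$s_\term$ behavior of $\mathcal{D}^{\pi'}$ is a faithful copy of $\mathcal{D}^\pi$, the cylinder-set probabilities of finite paths in $S$ agree; hence the measure of the event ``$pre(\mypath)\cap Paths_{fin,\ltlfphi}\neq\emptyset$'' is identical under $\pi$ and $\pi'$. This gives $Pr(\M^\pi\models Paths_{fin,\ltlfphi}) = Pr(\M'^{\pi'}\models Paths_{fin,\ltlfphi})$, and taking suprema yields $\leq$.

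For the $\geq$ direction, I would take a policy $\pi'\in\Pi'$ on $\M'$ and project it to a policy $\pi$ on $\M$ by simply ignoring the $a_\term$ choices: whenever $\pi'$ would play $a_\term$ from a state $s\in S$, let $\pi$ instead play some fixed enabled action of $\Act(s)$ (arbitrary, since once $\pi'$ terminates, the continuation is irrelevant to membership in $Paths_{fin,\ltlfphi}$); otherwise $\pi$ copies $\pi'$. One then argues that for every run $\mypath$ of $\mathcal{D}^{\pi'}$, its maximal prefix lying in $S$ is a prefix of a corresponding run of $\mathcal{D}^\pi$, and membership of this prefix in $Paths_{fin,\ltlfphi}$ is unchanged; a coupling of the two induced DTMCs (matching transitions step-by-step until $\pi'$ fires $a_\term$) shows $Pr(\M'^{\pi'}\models Paths_{fin,\ltlfphi}) \le Pr(\M^\pi\models Paths_{fin,\ltlfphi})$. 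Taking suprema gives $\geq$, and the two inequalities together give the claimed equality.

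The main obstacle I anticipate is bookkeeping with the measure-theoretic side: the induced DTMC $\mathcal{D}^\pi$ has states corresponding to \emph{finite paths} of the MDP, so one must be careful that the policy transformations are well-defined as functions on $Paths_{fin}$ and that the claimed equality of cylinder-set probabilities really does extend, via Carath\'eodory/uniqueness of the measure (Def.~\ref{def:probability_of_paths_dtmc}), to the $\sigma$-algebra event in \eqref{eq:prob_interested_paths}. A secondary subtlety is verifying that the lifted policy $\pi'$ genuinely enforces a visit to $s_\term$ on \emph{every} run, including runs on which $\ltlfphi$ is never satisfied (so one cannot simply say ``terminate upon satisfaction''); handling this requires $\pi'$ to also terminate unconditionally on at least one schedule, which does not affect the event probability since those runs are excluded from the event anyway. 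Neither obstacle is deep, but both need to be stated cleanly to make the argument rigorous.
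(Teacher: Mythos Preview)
Your proposal is correct and follows essentially the same approach as the paper: prove the two inequalities separately by explicit policy transformations (mimic $\pi$ on $\M'$ and switch to $a_\term$ once a prefix in $Paths_{fin,\ltlfphi}$ is seen for the $\leq$ direction; replace $a_\term$ by an arbitrary enabled action to project $\pi'$ back to $\M$ for the $\geq$ direction), and argue equality of the relevant cylinder-set probabilities. You are in fact more careful than the paper about two points it glosses over---ensuring $\pi'\in\Pi'$ even on runs that never satisfy $\ltlfphi$, and invoking uniqueness of the measure to pass from cylinder sets to the full event---so your write-up would be a strict improvement in rigor.
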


\begin{proof}

We prove the results by showing that LHS $\leq$ RHS and then LHS $\geq$ RHS.

\textit{Case 1.} We want to prove that: $$\max_{\pi \in \Pi} Pr(\M^\pi \models Paths_{fin,\ltlfphi}) \leq \max_{\pi' \in \Pi'} Pr(\M'^{\pi'} \models Paths_{fin,\ltlfphi}).$$
Consider policy $\pi \in \Pi$ on $\M$. We can construct policy $\pi'$ on $\M'$ from $\pi$ as follows:
\begin{equation*}
	\label{eq:aug_policy}
	\pi'(\mypath_{fin}) = 
	\begin{cases}
		\pi(\mypath_{fin}) 	& \text{if } \mypath_{fin} \notin Paths_{fin,\ltlfphi} \; \wedge \; last(\mypath_{fin})\neq s_\term \\
		a_\term 			& \text{otherwise}.
	\end{cases}
\end{equation*}
Recall the probability measure given by the cylinder set $Pr(Cyl(s_0 \ldots s_n)) = \prod_{0 \leq i < n}P(s_i, \pi^*(s_i), s_{i+1})$ and that the transition probability under $a_\term$ is always $1$. Thus from optimal policy $\pi^*$ on $\M$, we can construct $\pi'$ on $\M'$ with equal probability.  

\textit{Case 2.}
Next we prove that: $$\max_{\pi \in \Pi} Pr(\M^\pi \models Paths_{fin,\ltlfphi}) \geq \max_{\pi' \in \Pi'} Pr(\M'^{\pi'} \models Paths_{fin,\ltlfphi}),$$

\noindent
Consider an optimal policy $\pi'^*$ on $\M'$. By assumption, we know that $\pi'^*$ chooses $a_{\term}$ at some finite point along any path. Consider an arbitrary path and let $a_{\term}$ be chosen as the $n+1^{th}$ action: $\mypath'^* = s_0 \ldots s_n s_{\term} \ldots$ Let $k \leq n$ be the smallest natural number such that $\mypath' = s_0 \ldots s_k \in Paths_{fin,\ltlfphi}$. Then consider policy $\pi'$ which takes action $a_{\term}$ at state $s_k$ and with probability 1 transitioned to state $s_{\term}$ ($\pi'$ and $\pi'^*$ may be identical). Recall the probability measure given by the cylinder set $Pr(Cyl(s_0 \ldots s_n)) = \prod_{0 \leq i < n}P(s_i, \pi'^*(s_i), s_{i+1})$. Thus, because $a_{\term}$ has probability 1, $Pr(\M'^{\pi'} \models Paths_{fin,\ltlfphi}) \geq Pr(\M'^{\pi'^*} \models Paths_{fin,\ltlfphi})$. We can map $\pi'$ to a policy $\pi$ for $\M$ with the same probability of satisfaction by changing every choice of $a_{\term}$ to an arbitrary available action. Thus, from (optimal policy) $\pi'$ we can construct a policy $\pi$ on $\M$ with an equal probability of satisfaction.

\end{proof}

\subsubsection{\ltlf to \ltl}

To translate an \ltlf formula on $\M$ to its equivalent \ltl formula on $\M'$, we follow \cite{de2013linear}.
Let $\Phi_f$ be the set of \ltlf formulas $\ltlfphi$ defined over atomic propositions in $AP$ and $\Phi$ be the set of \ltl formulas $\ltlphi$ defined over $AP'$.  Then 
$g : \Phi_f \rightarrow \Phi$
is defined as:
\begin{equation}
    \label{eq:ltlf2ltl}
    g(\ltlfphi) = t(\ltlfphi) \, \wedge \, \big(alive \: U \: (G \: \neg alive)\big),
\end{equation}
where $t : \Phi_f \rightarrow \Phi$
is inductively defined as:
\begin{itemize}
	\item $t(p) = (p \wedge alive)$, where $ p \in AP$;
	\item $t(\neg \ltlfphi) = \neg t (\ltlfphi)$;
	\item $t(\ltlfphi_1 \wedge \ltlfphi_2) = t(\ltlfphi_1) \wedge t(\ltlfphi_2)$;
	\item $t(X \: \ltlfphi) = X \: (alive \: \wedge \:  t(\ltlfphi))$;
	\item $t(\ltlfphi_1 \: U  \: \phi_2) = t(\ltlfphi_1) \: U \: (alive \: \wedge \: t(\ltlfphi_2))$.
\end{itemize}

\noindent
In this construction, mapping $t$ ensures that $alive$ is present in the last letter of every finite trace that satisfies $\ltlfphi$.  Then, $g$ translates $\ltlfphi$ to $\ltlphi$ by requiring $alive$ to be true until $\ltlfphi$ is satisfied and false thereafter.  In other words, the translated \ltl formula $\ltlphi = g(\ltlfphi)$ requires the system to terminate after it satisfies $\ltlfphi$.

\begin{theorem}
	\label{theorem:ltlsynthesis}
	Given an MDP $\M$ and its augmentation $\M'$ as well as an \ltlf formula $\ltlfphi$ and the corresponding \ltl formula $g(\ltlfphi)$, then 
	$$\max_{\pi \in \Pi} Pr(\M \models \ltlfphi) = \max_{\pi' \in \Pi'} Pr(\M' \models g(\ltlfphi)),$$ 
	where $\Pi'$ is the set of all policies in $\M'$.
	
	
\end{theorem}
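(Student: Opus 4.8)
The plan is to route the statement through Lemma~\ref{lemma:pathProb} by choosing the family of finite paths that satisfy $\ltlfphi$ as a trace. Put
\[
  Paths_{fin,\ltlfphi} \;=\; \{\, \mypath \in Paths_{fin}(s_{\init}) \;\mid\; \L(\mypath) \in \mathcal{L}(\ltlfphi) \,\},
\]
which is also a set of finite paths of $\M'$ since $P'$ restricts to $P$ on $S\times\Act\times S$. First, for every $\pi\in\Pi$ we have $Pr(\M^\pi \models \ltlfphi) = Pr(\M^\pi \models Paths_{fin,\ltlfphi})$: by the definition of a path satisfying an \ltlf formula, $\mypath \models \ltlfphi$ holds iff some prefix of $\mypath$ has its trace in $\mathcal{L}(\ltlfphi)$, i.e.\ iff $pre(\mypath)\cap Paths_{fin,\ltlfphi}\neq\emptyset$, which is exactly the event in~\eqref{eq:prob_interested_paths}. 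Maximizing over $\Pi$ and applying Lemma~\ref{lemma:pathProb} gives
\[
  \max_{\pi\in\Pi} Pr(\M^\pi\models\ltlfphi) \;=\; \max_{\pi'} Pr(\M'^{\pi'}\models Paths_{fin,\ltlfphi}),
\]
where the right-hand maximum ranges over policies of $\M'$ that enforce a visit to $s_\term$; the same value is obtained by maximizing over all policies of $\M'$, since any policy of $\M'$ is simulated on $\M$ by replacing $a_\term$ with an arbitrary enabled action without changing the $Paths_{fin,\ltlfphi}$-probability (a $Paths_{fin,\ltlfphi}$-prefix never passes through $s_\term$) and $\Pi$ embeds into the policies of $\M'$. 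It therefore remains to prove
\[
  \max_{\pi'} Pr(\M'^{\pi'}\models Paths_{fin,\ltlfphi}) \;=\; \max_{\pi'} Pr(\M'^{\pi'}\models g(\ltlfphi)),
\]
the maxima over all policies $\pi'$ of $\M'$.

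The core ingredient is a path-level correctness statement for the translation~\eqref{eq:ltlf2ltl}, specialized to the shape of $\M'$, which I would extract from \cite{de2013linear}. For an infinite path $\mypath'$ of $\M'$, write $u(\mypath')$ for its maximal prefix avoiding $s_\term$ (a finite path of $\M$ if $\mypath'$ reaches $s_\term$, and $\mypath'$ itself otherwise). The claim is: $\mypath'\models g(\ltlfphi)$ iff $\mypath'$ reaches $s_\term$ and $u(\mypath')\in Paths_{fin,\ltlfphi}$. This rests on two observations. First, in $\M'$ the proposition $alive$ holds exactly at the non-$s_\term$ states and $s_\term$ is absorbing, so a trace of $\M'$ is of the form ``a finite block of positions containing $alive$, then $\neg alive$ forever'' precisely when the path reaches $s_\term$; hence, on $\M'$, the conjunct $alive\,U\,(G\,\neg alive)$ of $g$ is equivalent to ``$\mypath'$ reaches $s_\term$''. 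Second, on such a trace the relativization carried out by $t$ makes $t(\ltlfphi)$ hold iff $\ltlfphi$ holds, in the finite-trace semantics, on $\L(u(\mypath'))$ --- this is the correctness of the De Giacomo--Vardi translation, which I would restate for finite traces read off $\M'$-paths and check by structural induction on $\ltlfphi$, the $X$- and $U$-cases being where the $alive$ guard keeps the quantifier inside the alive block.

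Given the claim, the $\le$ direction is an event inclusion: $\mypath'\models g(\ltlfphi)$ forces $u(\mypath')\in Paths_{fin,\ltlfphi}$ with $u(\mypath')\in pre(\mypath')$, so $\{\mypath'\models g(\ltlfphi)\}\subseteq\{pre(\mypath')\cap Paths_{fin,\ltlfphi}\neq\emptyset\}$ and hence $Pr(\M'^{\pi'}\models g(\ltlfphi))\le Pr(\M'^{\pi'}\models Paths_{fin,\ltlfphi})$ for every $\pi'$. For the reverse, take any policy optimal for $Paths_{fin,\ltlfphi}$ and modify it, exactly as in Case~2 of the proof of Lemma~\ref{lemma:pathProb}, to play $a_\term$ the first time the path so far lies in $Paths_{fin,\ltlfphi}$; because the $a_\term$-transition has probability one, this preserves the $Paths_{fin,\ltlfphi}$-probability (the same cylinder bookkeeping as in Lemma~\ref{lemma:pathProb}), so the resulting $\pi'$ is still optimal, and now on $\M'^{\pi'}$ a satisfying prefix, when realized, equals $u(\mypath')$, so by the claim the events $\{pre(\mypath')\cap Paths_{fin,\ltlfphi}\neq\emptyset\}$ and $\{\mypath'\models g(\ltlfphi)\}$ coincide. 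Thus $\max_{\pi'}Pr(\M'^{\pi'}\models g(\ltlfphi))\ge Pr(\M'^{\pi'}\models g(\ltlfphi))=Pr(\M'^{\pi'}\models Paths_{fin,\ltlfphi})=\max_{\pi'}Pr(\M'^{\pi'}\models Paths_{fin,\ltlfphi})$, which with the $\le$ direction closes the third display; combining with the equalities of the first paragraph yields the theorem.

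The step I expect to be the main obstacle is the path-level translation-correctness claim: the literature states $g$ as a \emph{language} translation (satisfiability of $\ltlfphi$ reduces to satisfiability of $g(\ltlfphi)$), so one must pin it down on individual $\M'$-traces --- in particular that the only infinite traces of $\M'$ satisfying the second conjunct of $g$ are those whose $alive$-block is a genuine finite prefix path of $\M$, and that $t$ transports the finite-trace semantics faithfully onto that block. Everything else is either a restatement of the cylinder/measure argument already used for Lemma~\ref{lemma:pathProb} or the ``terminate as soon as $\ltlfphi$ is satisfied'' policy surgery; a minor point to dispatch is the mismatch between the theorem's quantification over all policies of $\M'$ and Lemma~\ref{lemma:pathProb}'s terminating ones, which is harmless because a premature $a_\term$ can only lose probability for both objectives.
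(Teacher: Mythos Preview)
Your proposal is correct and follows essentially the same route as the paper: invoke Lemma~\ref{lemma:pathProb} on the set of $\ltlfphi$-satisfying finite paths, then use the De Giacomo--Vardi translation correctness from \cite{de2013linear} to identify $\ltlfphi$-satisfying finite paths of $\M$ with $g(\ltlfphi)$-satisfying infinite paths of $\M'$, with the $alive\,U\,(G\,\neg alive)$ conjunct handling the termination requirement. Your version is simply more explicit---you spell out the path-level claim, the $\leq/\geq$ split via event inclusion and policy surgery, and the quantifier mismatch between $\Pi'$ in the theorem and the terminating policies in Lemma~\ref{lemma:pathProb}---whereas the paper dispatches all of this in a couple of sentences by citing \cite{de2013linear}.
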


\begin{proof}
This follows through an application of Lemma \ref{lemma:pathProb} and a proof that the finite paths on $\M$ satisfying $\ltlfphi$ correspond to the infinite paths on $\M'$ satisfying $g(\ltlfphi)$. First we note that the termination assumption of Lemma \ref{lemma:pathProb} (i.e., that policies visit $s_{\term}$) is met because $g(\ltlfphi)$ contains the clause $\big(alive \: U \: (G \: \neg alive)\big)$. For all $s \in \M$ and $s' \in \M' \setminus \{ s_{\term} \}: L(s) = L(s') \setminus \{ alive \}$. Recall that $s_{\term}$ is a sink state with label $\emptyset$. Then from \cite{de2013linear} it follows that finite paths of $\M$ that satisfy $\ltlfphi$ correspond to paths in $\M'$ that satisfy $g(\ltlfphi)$.




\end{proof}



A direct result of the above theorem is the reduction of \ltlf policy synthesis to classical \ltl policy synthesis on MDPs.
\begin{corollary}
	\label{corollary:LTLreduction}
	The policy synthesis to maximize the probability of satisfying \ltlf formula $\ltlfphi$ on $\M$ can be reduced to the \ltl maximal policy synthesis problem.
\end{corollary}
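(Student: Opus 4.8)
The plan is to read Corollary~\ref{corollary:LTLreduction} as an immediate consequence of Theorem~\ref{theorem:ltlsynthesis} together with an \emph{effective} version of the $\M \mapsto \M'$ and $\ltlfphi \mapsto g(\ltlfphi)$ constructions and a back-translation of policies. First I would observe that both constructions are cheap and effective: $\M'$ adds a single sink state $s_\term$, a single action $a_\term$, and a single atomic proposition $alive$, so $|\M'|$ is linear in $|\M|$; and $g(\ltlfphi) = t(\ltlfphi) \wedge (alive \, U \, (G \, \neg alive))$ is obtained from $\ltlfphi$ by the syntax-directed rewriting $t(\cdot)$ followed by conjunction with a fixed formula, so $|g(\ltlfphi)|$ is linear in $|\ltlfphi|$. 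Hence $(\M', g(\ltlfphi))$ is a legal instance of classical \ltl maximal policy synthesis on MDPs, computable from $(\M,\ltlfphi)$ in polynomial time.

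Second, Theorem~\ref{theorem:ltlsynthesis} gives $\max_{\pi\in\Pi} Pr(\M^\pi \models \ltlfphi) = \max_{\pi'\in\Pi'} Pr(\M'^{\pi'}\models g(\ltlfphi))$, so the optimal value of the \ltlf problem on $\M$ coincides with the optimal value of the \ltl problem on $\M'$. It then remains to turn an optimal \ltl policy $\pi'^{*}$ on $\M'$ into an optimal (deterministic) \ltlf policy $\pi^{*}$ on $\M$, and here I would reuse the back-translation already used in Case~2 of the proof of Lemma~\ref{lemma:pathProb}. Every finite path of $\M$ is also a finite path of $\M'$ that never visits $s_\term$ (the two MDPs agree on $S$, $\Act$, and $P$ below $s_\term$), so $\pi'^{*}$ is defined on it; set $\pi^{*}$ to agree with $\pi'^{*}$ whenever $\pi'^{*}$ picks an action of $\Act(s_n)$, and to pick an arbitrary enabled action of $\Act(s_n)$ whenever $\pi'^{*}$ picks $a_\term$ (well defined since every state of $\M$ has an enabled action). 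The induced chains $\mathcal{D}^{\pi^{*}}$ on $\M$ and $\mathcal{D}^{\pi'^{*}}$ on $\M'$ are then in probability-preserving correspondence up to the suffixes past $s_\term$, and an $\ltlfphi$-satisfying prefix is decided before any $a_\term$ is ever taken, so $Pr(\M^{\pi^{*}}\models\ltlfphi) = Pr(\M'^{\pi'^{*}}\models g(\ltlfphi))$, the common optimum. Thus running any classical \ltl maximal-policy-synthesis procedure on $(\M', g(\ltlfphi))$ — e.g., translate $g(\ltlfphi)$ to a DRA, take the product with $\M'$, and solve the resulting accepting-end-component / maximal-reachability problem, as in PRISM \cite{kwiatowska2011prism} — and back-translating the returned policy solves Problem~\ref{problem}; I would also add the one-line remark that this places \ltlf synthesis on MDPs in the same complexity class as \ltl synthesis on MDPs \cite{baier2008principles,de2015synthesis}.

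The only real subtlety — the step I expect to need the most care — is arguing that the Case~2 back-translation is faithful \emph{as a policy map}, i.e., that under $\pi^{*}$ the measure on infinite paths of $\M$ assigns exactly the $g(\ltlfphi)$-mass to the set of paths with an $\ltlfphi$-satisfying prefix, not merely that a path-by-path correspondence exists. This uses that (i) $g(\ltlfphi)$ forces $alive \, U \, (G \, \neg alive)$, so every $g(\ltlfphi)$-satisfying path of $\M'$ reaches $s_\term$ and hence has a well-defined first $\ltlfphi$-satisfying prefix occurring strictly before termination, and (ii) the $t(\cdot)$ rewriting keeps $alive$ in the last letter of every satisfying finite trace, so that prefix lies entirely in the $S$-part of $\M'$ and is literally a finite path of $\M$. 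Both facts are exactly what Theorem~\ref{theorem:ltlsynthesis} (via \cite{de2013linear}) already establishes, so no new argument is required and the write-up is a short assembly of these pieces.
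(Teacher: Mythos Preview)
Your proposal is correct and aligns with the paper's approach: the paper states the corollary as ``a direct result of the above theorem'' with no separate proof, so your plan to invoke Theorem~\ref{theorem:ltlsynthesis} together with the explicit $\M\mapsto\M'$ and $\ltlfphi\mapsto g(\ltlfphi)$ constructions is exactly the intended argument. Your write-up is in fact more careful than the paper's, since you make explicit the effectiveness of the two constructions and the Case~2 policy back-translation needed to recover $\pi^{*}$ on $\M$ from $\pi'^{*}$ on $\M'$; the paper leaves these implicit.
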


Therefore, to solve Problem~\ref{problem}, we can use $\ltl$ synthesis on augmented MDP $\M'$ and property $\ltlphi = g(\ltlfphi)$.  The general \ltl synthesis algorithm is well-established \cite{baier2008principles} and follows the following steps: (1) translation of the \ltl formula to a DRA, (2) composition of the DRA with the MDP, which results in another MDP called the product MDP, (3) identification of the maximal end-components on the product MDP that satisfy the accepting condition of DRA, and finally (4) solving the maximal reachability probability problem (stochastic shortest path problem) \cite{de1997formal} on the product MDP with the accepting end-components as the target states.  There exist many tools such as PRISM \cite{kwiatowska2011prism} that can solve the \ltl synthesis problem.  Specifically, PRISM has a symbolic implementation of this algorithm, enabling fast computations.  In Section \ref{sec:evaluation}, however, we show that the native approach introduced below outperforms the \ltl-reduction approach even by using PRISM's symbolic engine.

\vspace{-1mm}
\subsection{Native Approach}

\label{sec:native_synthesis}

In the native approach, we first convert the \ltlf formula into a \textit{deterministic finite automaton} (DFA) using specialized tools \cite{zhu2017symbolic}. Then we take the product of this automaton with the MDP. Finally, we synthesize a strategy by solving the maximal reachability probability problem on this product MDP. 

\vspace{-1mm}
\subsubsection{Translation to DFA}
\vspace{-1mm}

A Deterministic Finite Automaton (DFA) is a tuple: $\mathcal{A} = (Q, \Sigma, \delta, q_0, F)$, where $Q$ is the set of states, $\Sigma$ the alphabet, $\delta: Q \times \Sigma \mapsto Q$ the transition function, $q_0$ the initial state and $F$ the set of accept states.
A finite \textit{run} of a DFA on a trace $\trace = \trace_0 \trace_1 \ldots \trace_n$, where $\trace_i \in \Sigma$, is the sequence of states 
$q_0 q_1 \ldots q_{n+1}$ such that $q_{i+1} = \delta (q_i, \trace_i)$ for all $0 \leq i \leq n$.  This run is accepting if $q_{n+1} \in F$.  

Following \cite{zhu2017symbolic}, we translate \ltlf to a DFA using MONA \cite{henriksen1995mona}.
\ltlf is expressively equivalent to First-order Logic on finite words, which in turn is a fragment of Weak Second-order Theory of One Successor. We use the translation given in \cite{de2013linear} to convert \ltlf formula $\ltlfphi$ to a First-order Logic formula. MONA offers translations from Weak Second-order Theory of One or Two Successors (WS1S/WS2S) to a DFA that accepts precisely the language of our \ltlf formula $\ltlfphi$.  We denote this DFA by $\A_\ltlfphi$.

\vspace{-1mm}
\subsubsection{Product of DFA with MDP}
\vspace{-1mm}

Given DFA $\A_\ltlfphi$, we can take the product with the MDP $\M$ to achieve a new MDP $M \times \A_\ltlfphi$ as follows.
The product of MDP $\M = (S, \Act, P, s_{\init}, AP, \L)$ and DFA $\A_\ltlfphi = (Q, \Sigma, q_0, \delta, F)$ is an MDP 
\[\M \times \A_\ltlfphi = (S \times Q, \Act, P^{\M \times \A_\ltlfphi}, (s_{\init}, q_{\init})),\] 
where
$q_{\init} = \delta(q_0, L(s_{\init}))$, and
\begin{equation*}
	P^{\M \times \A_\ltlfphi}((s, q), a, (s', q')) =
	\begin{cases}
		P(s, a, s') & \text{if } q' = \delta(q, \L(s)) \\
		0             & \text{otherwise }
	\end{cases}.
\end{equation*}
The paths of this product MDP have one-to-one correspondence to the paths of MDP $\M$ as well as the runs of $\A_\ltlfphi$.  Therefore, the projection of the paths of $\M \times \A_\ltlfphi$ that reach state $(s,q)$, where $q \in F$, on $\A_\ltlfphi$ are accepting runs and on $\M$ are $\ltlfphi$-satisfy paths.  Thus, we solve Problem~\ref{problem} by synthesizing an optimal policy on this product MDP,  using standard tools for the maximal reachability probability problem as discussed in Sec.~\ref{sec:ltlf2ltl_synthesis}.  The resulting policy is stationary on the product MDP but history-dependent on $\M$.

\added{In theory, both the \ltl approach and the native approach exhibit runtimes doubly-exponential in the size of the formula, due to the need to construct deterministic automata \cite{ Courcoubetis1995complexity, de2015synthesis}. In practice, the native approach's translation to a DFA using specialized tools offers better runtime and memory usage compared to the \ltl-based approach.} Additionally, it produces a minimal DFA, while the \ltl pipeline produces an $\omega$-automaton which cannot be minimized effectively by existing tools. We show the benefits of the native approach experimentally in Sec.~\ref{sec:evaluation}.


\section{Evaluation}
\label{sec:evaluation}

We evaluate the proposed synthesis approaches through a series of benchmarking case studies.  Below, we provide details on our implementation, experimental scenarios, and obtained results. A version of our tool along with examples is available on GitHub \cite{wells2020github}.

\subsection{Experimental Framework}

We run our experiments using the PRISM framework \cite{kwiatowska2011prism}. PRISM uses a symbolic encoding of MDPs as well as an encoding of automata as a list of edges, where the labels of edges are encoded symbolically using BDDs. PRISM supports several tools for the \ltl-to-automata translation. We tested PRISM's built-in translator as well as Rabinizer, LTL3DRA and SPOT \cite{kretinsky2018rabinizer,babiak2013ltl3dra,duret2016spot}. Of these, PRISM's built in conversion and SPOT performed significantly better than the others, and were used for evaluation.

We note that we also considered a comparison study against MoChiBa~\cite{sickert2016mochiba}, a tool based on PRISM that uses Limit-Deterministic B\"uchi Automata. 
Unfortunately, it is not possible to conduct this study since the current implementation of MoChiBa does not support loading automata from external tools and only runs with a modified implementation of PRISM's explicit engine. Our tests use PRISM's Hybrid engine (enabled by default). We leave a comparison of our DFA-based approach for \ltlf synthesis to an approach based on Limit-Deterministic B\"uchi Automata to future work. 


In the implementation of the \ltl-reduction approach, we augment the MDP and convert the \ltlf formula into an equivalent \ltl formula as described in Section \ref{sec:ltlf2ltl_synthesis}. Then, we input both the \ltl formula and the modified MDP into PRISM for synthesis.
%
%
In the implementation of the native approach, we invoke PRISM on the original \ltlf formula and MDP and use an external tool to convert the \ltlf formula into a DFA using MONA \cite{henriksen1995mona} then convert from MONA's format to the HOA format \cite{Babiak2015HOA}. Note that external tools (SPOT and our native approach) read from hard disk, whereas using PRISM's built-in conversion avoids this. Nevertheless, even including this time, the native approach (and sometimes SPOT) typically gives better performance as shown below. All experiments are run on a computer with an Intel i7-8550U and 16GB of RAM. PRISM is run using the default settings.

\vspace{-3mm}
\subsection{Experimental Scenarios}

\subsubsection{Test MDPs}
We consider four types of MDPs: \textit{Gridworld}, \textit{Dining Philosophers}, \textit{Nim}, and \textit{Double Counter}. 
In Gridworld, an agent is given some goal as an \ltlf formula and must maximize the probability of satisfaction. The agent has four actions: North, South, East, and West.  Under each action, the probability of moving to the cell in the intended direction is 0.69, then 0.01 for its opposite cell, and finally 0.1 for each of the other directions and for remaining in the same cell.  
If the resulting movement would place the agent in collision with the boundary, the agent remains in its current cell.  We model the motion of this agent as an MDP, where the states correspond to the cells of the grid and the set of actions is $\{a_\text{north}, a_\text{south}, a_\text{east}, a_\text{west}\}$.

We base the Dining-Philosophers domain on the tutorial from the PRISM website. We consider a ring of five philosophers with two different specifications. Note that typical Dining-Philosophers specifications of interest are infinite-trace; our tests are merely meant to show that our approach works on MDPs other than Gridworld. Both \textit{Nim} and \textit{Double Counter} are probabilistic versions of games presented in \cite{tabajara2019partitioning}.


\subsubsection{Test formulas}
Unfortunately, there is no standard set of \ltlf formulas that we can use for benchmarking. In \cite{zhu2017symbolic}, random \ltl formulas are used as the basis of \ltlf benchmarks; however, because we also consider an MDP, random formulas are frequently tautologies or non-realizable with respect to the MDP.  
For instance, consider a randomly generated formula $\ltlfphi = p_1 U p_2$, where $p_1$ and $p_2$ are randomly assigned to the labels of the states in the Gridworld MDP.  There is likely no path of the MDP that can satisfy this formula. As a matter of fact, in a test of more than 70 randomly generated \ltlf formulas and Gridworld MDPs, only one was ``interesting'' (yielding probability between zero and one) on the corresponding MDP. For benchmarking, the interest is in sets of formulas that make sense in a probabilistic setting.

Therefore, as the first set of test formulas for the \textit{Gridworld} MDP, we considered a natural finite-trace specification where the agent has $n$ goals to accomplish in any order, as well as a ``safety'' property where it must globally avoid some states. This is typical of e.g., a robotics domain \cite{he2019efficient}. We use \textbf{Fn} to denote this formula for a given $n$.
For the second set of test formulas, we keep regions to visit and avoid but introduce some ordering and repeat visits. \textbf{OS} is a short formula introducing order and \textbf{OL} is a longer formula that also contains nested temporal operators. All formulas are given in the appendix of the online version \cite{wells2019ltlf}.


For the \textit{Dining Philosophers} domains, typical examples focus on infinite-run properties. We test several finite-run properties on these domains. These properties are meant to illustrate our tool on an MDP other than Gridworld, but are not representative of typical finite-trace properties. One version (\textbf{D5}) asserts that all philosophers must eat at least once. The other (\textbf{D5C}) is a more complex property involving orders of eating. Both formulas are available in the appendix of the online version \cite{wells2019ltlf}.

For \textit{Nim}, the game of Nim is played against a stochastic opponent. 
To increase difficulty, the specifications require that randomly chosen stack heights are to be reached or avoided by the system player.
For \textit{Double Counter}, two four-bit binary counters are used.  One counter is controlled by a stochastic environment, and the other is controlled by the system. The aim of the system is to make its counter match that of the environment.

\begin{figure*}[t!]
    \centering
    \begin{subfigure}[b]{0.32\textwidth}
        \centering
        \includegraphics[trim=15 5 30 30, clip, width=\textwidth]{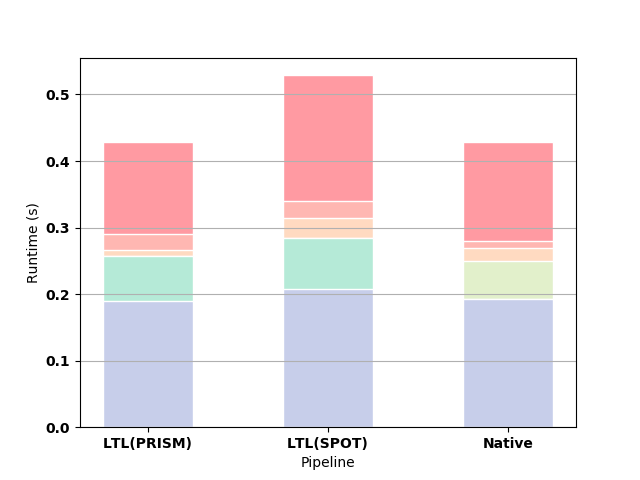}
        \vspace{-6mm}
        \caption{$10 \times 10$ grid with \textbf{F3}.}
        \label{fig:times_10x10_F3}
    \end{subfigure}
    \begin{subfigure}[b]{0.32\textwidth}
        \centering
        \includegraphics[trim=15 5 30 30, clip, width=\textwidth]{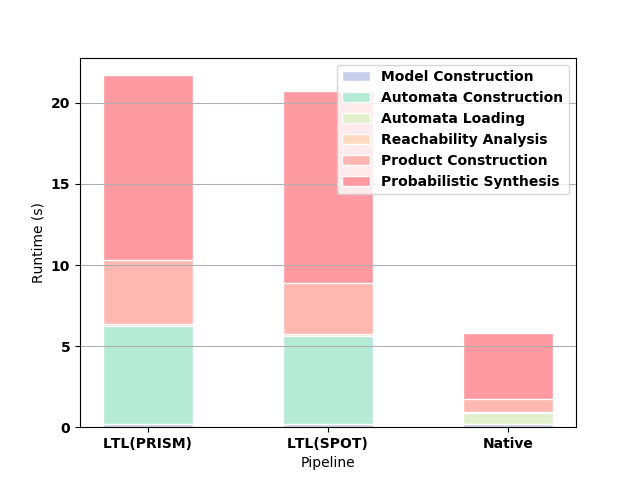}
        \vspace{-6mm}
        \caption{$10 \times 10$ grid with \textbf{F8}.}
        \label{fig:times_10x10_F8}
    \end{subfigure}
    \begin{subfigure}[b]{0.32\textwidth}
        \centering
        \includegraphics[trim=15 5 30 30, clip, width=\textwidth]{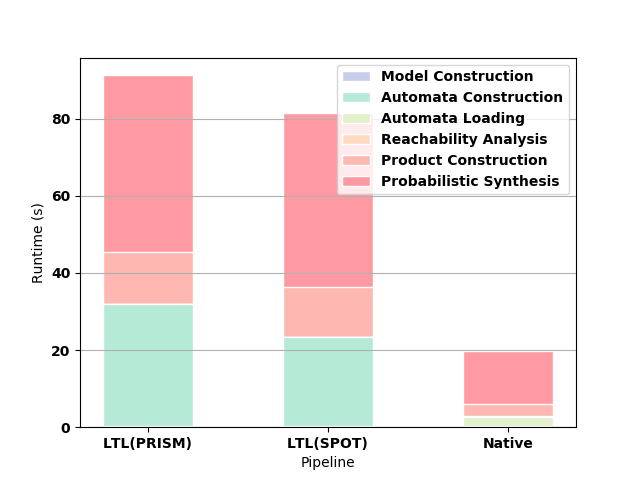}
        \vspace{-6mm}
        \caption{$10 \times 10$ grid with \textbf{F9}.}
        \label{fig:times_10x10_F9}
    \end{subfigure}
    
    \begin{subfigure}[b]{0.32\textwidth}
        \centering
        \includegraphics[trim=15 5 30 30, clip, width=\textwidth]{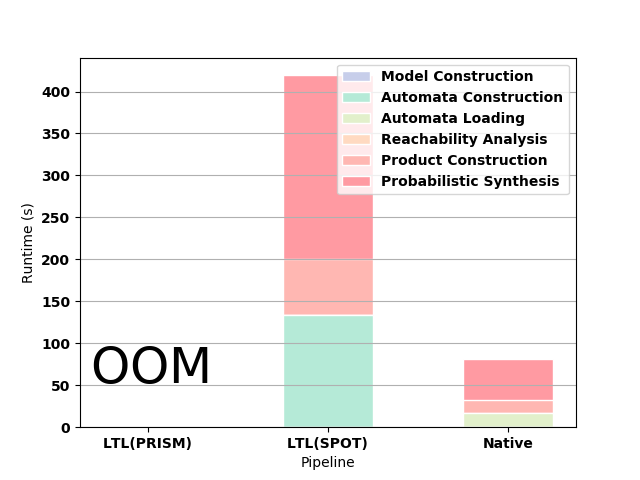}
        \vspace{-6mm}
        \caption{$10 \times 10$ grid with \textbf{F10}.}
        \label{fig:times_10x10_F10}
    \end{subfigure}
    \begin{subfigure}[b]{0.32\textwidth}
        \centering
        \includegraphics[trim=15 5 30 30, clip, width=\textwidth]{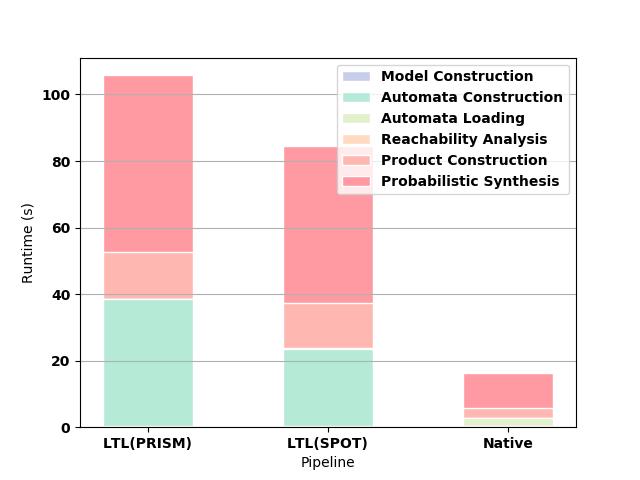}
        \vspace{-6mm}
        \caption{Random grid with \textbf{F9}.}
        \label{fig:times_random_F9}
    \end{subfigure}
    \begin{subfigure}[b]{0.32\textwidth}
        \centering
        \includegraphics[trim=15 5 30 30, clip, width=\textwidth]{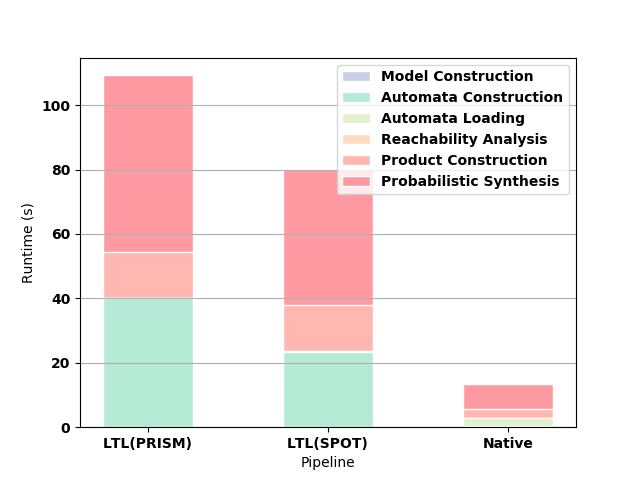}
        \vspace{-6mm}
        \caption{Hallway grid with \textbf{F9}.}
        \label{fig:times_hallway_F9}
    \end{subfigure}

    \begin{subfigure}[b]{0.32\textwidth}
        \centering
        \includegraphics[trim=15 5 30 30, clip, width=\textwidth]{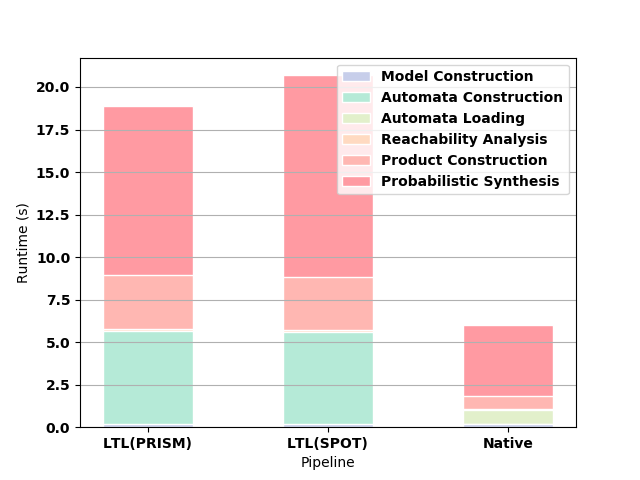}
        \vspace{-6mm}
        \caption{$10 \times 10$ grid with \textbf{OS}.}
        \label{fig:times_10x10_other}
    \end{subfigure}
    \begin{subfigure}[b]{0.32\textwidth}
        \centering
        \includegraphics[trim=15 5 30 30, clip, width=\textwidth]{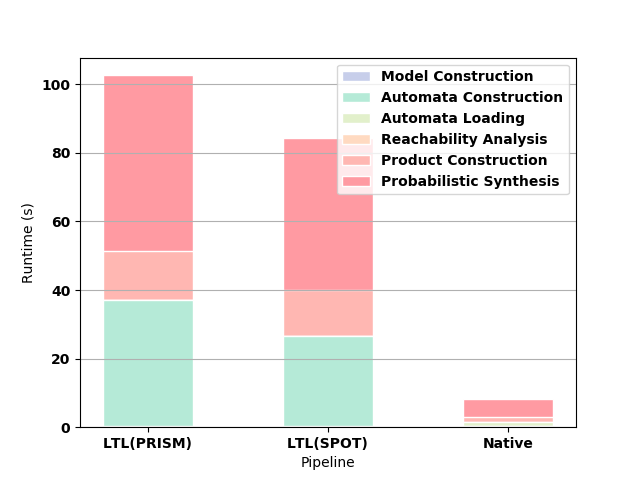}
        \vspace{-6mm}
        \caption{$10 \times 10$ grid with \textbf{OL}.}
        \label{fig:times_10x10_another}
    \end{subfigure}
    \begin{subfigure}[b]{0.32\textwidth}
        \centering
        \includegraphics[trim=15 5 30 30, clip, width=\textwidth]{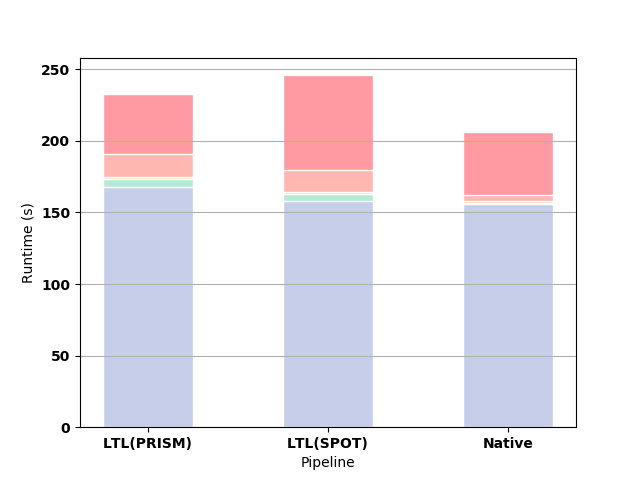}
        \vspace{-6mm}
        \caption{$50 \times 50$ grid with \textbf{F8}.}
        \label{fig:times_50x50_F8}
    \end{subfigure}
    
    \begin{subfigure}[b]{0.32\textwidth}
        \centering
        \includegraphics[trim=15 5 30 30, clip, width=\textwidth]{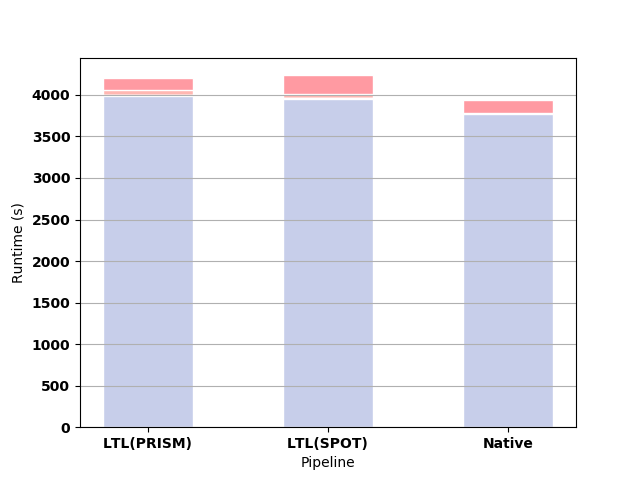}
        \vspace{-6mm}
        \caption{$100 \times 100$ grid with \textbf{F8}.}
        \label{fig:times_100x100_F8}
    \end{subfigure}
        \begin{subfigure}[b]{0.32\textwidth}
        \centering
        \includegraphics[trim=15 5 30 30, clip, width=\textwidth]{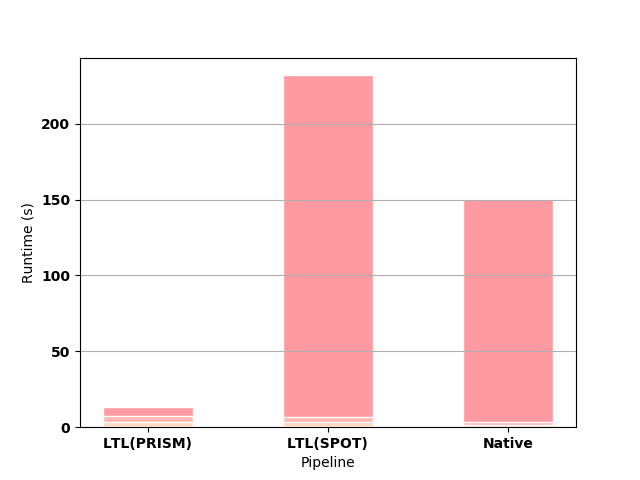}
        \vspace{-6mm}
        \caption{$5$ Dining Philosophers with \textbf{D5}.}
        \label{fig:phil5F5}
    \end{subfigure}
    \begin{subfigure}[b]{0.32\textwidth}
        \centering
        \includegraphics[trim=15 5 30 30, clip, width=\textwidth]{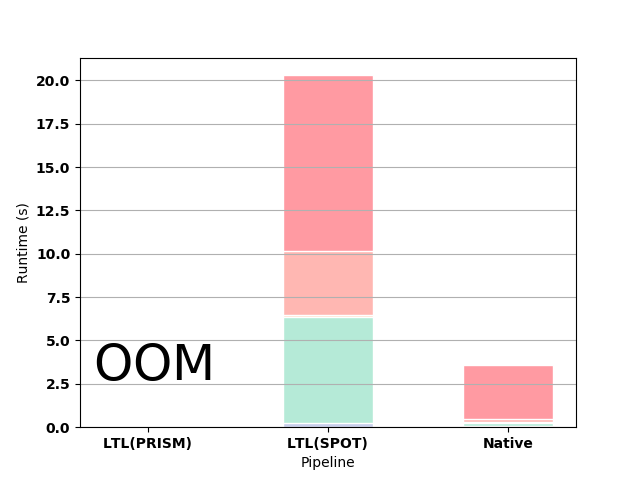}
        \vspace{-6mm}
        \caption{$5$ Dining Philosophers with \textbf{D5C}.}
        \label{fig:phil5complex}
    \end{subfigure}
    \caption{Runtime results for the Gridworld and Dining Philosophers MDPs with the LTL (PRISM and SPOT) and native pipelines.}
    \vspace{-4mm}
\end{figure*}

\subsection{Experimental Results}


We provide experiments varying the complexity of the formulas and of the MDPs.
Total runtime is shown in seconds. We refer to the translation to \ltl as the ``\ltl pipeline'' and the translation to a DFA via MONA as the ``Native pipeline.'' All plots are best viewed in color.


\subsubsection{Automata Construction}


First we measure how the length of the formula affects the synthesis computation time.  For all of these experiments, we use a $10 \times 10$ Gridworld as the original MDP. For small formulas, the time needed to read the HOA file from hard disk outweighs the shorter construction time and smaller automata of the native approach as shown in \autoref{fig:times_10x10_F3}. For formulas longer than \textbf{F3}, the native approach offers better computation time (e.g., see \autoref{fig:times_10x10_F8} and \autoref{fig:times_10x10_F9}).

PRISM successfully builds automata for formulas up to \textbf{F9}. In \autoref{fig:times_10x10_F10}, we highlight the superiority of the native approach in constructing the automaton for \textbf{F10}. For this formula, PRISM runs out of memory when constructing the automaton according to the \ltl pipeline. SPOT completes for \textbf{F10} but runs out of memory on \textbf{F11}. MONA works for formulas up to \textbf{F17}, which takes 8.586 seconds to compute, though writing the resulting file to disk is prohibitively expensive (the file is larger than 10GB). MONA runs out of memory constructing the automaton for \textbf{F18}.



We also considered a Gridworld with random obstacles (\autoref{fig:times_random_F9}) and with hallways (\autoref{fig:times_hallway_F9}), both with the formula \textbf{F9}. Finally, we consider two other formulas (\textbf{OS} and \textbf{OL}) in \autoref{fig:times_10x10_other} and \autoref{fig:times_10x10_another} to demonstrate that our improvement is not tied to the specific form of the specification.

For the Dining-Philosophers example, we consider 
five philosophers. Again, our results show that the native pipeline is faster and more memory efficient than the other approaches for automata construction (\autoref{fig:phil5complex}). However, the Dining-Philosophers example illustrates an issue where PRISM's built-in automata translation sometimes constructs automata such that computing the maximal accepting end-component using BDDs within PRISM is significantly faster (\autoref{fig:phil5F5}). The automata construction is still slower than both SPOT and our native approach, and the automata returned have more states, but the BDD representation is more efficient. This issue affects some automata, not only for our tool, but also for other external tools we tested. This means not only automata construction and size, but also the BDD representation are important for overall runtime. However, PRISM's built-in translator's memory usage scales more poorly than SPOT or our native approach. On a more complex formula \autoref{fig:phil5complex}, PRISM's built-in construction runs out of memory, even though SPOT and the native approach can finish construction in less than ten seconds.  Thus, for sufficiently complex formulas, PRISM's built-in translation is not viable. This suggests a need for automata construction methods that are not only more efficient but also produce automata that work well within PRISM.

\begin{wrapfigure}{r}{0.66\textwidth}
    \vspace{-4mm}
    \centering
    \begin{subfigure}[b]{0.31\textwidth}
        \centering
        \includegraphics[trim=15 5 30 30, clip, width=\textwidth]{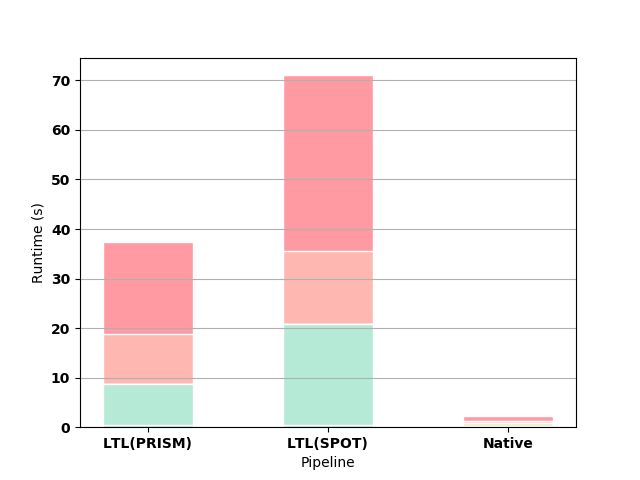}
        \vspace{-5mm}
        \caption{$200$-height random Nim game.}
        \label{fig:nim}
    \end{subfigure}
    \begin{subfigure}[b]{0.31\textwidth}
        \centering
        \includegraphics[trim=15 5 30 30, clip, width=\textwidth]{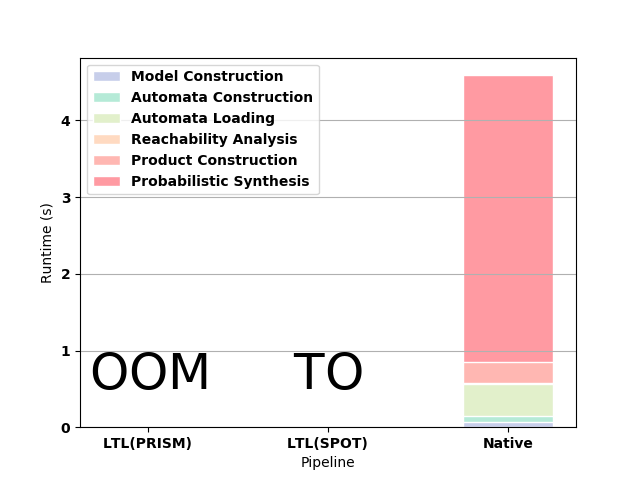}
        \vspace{-5mm}
        \caption{Double-Counter game.}
        \label{fig:counter}
    \end{subfigure}
    \vspace{-2mm}
    \caption{Runtime results for the Nim and Double-Counter games.}
    \vspace{-3mm}
\end{wrapfigure}
In the Nim game, the native approach far outperforms both PRISM and PRISM with SPOT (\autoref{fig:nim}). In the Double-Counter game, PRISM runs out of memory and PRISM with SPOT times out after more than 3 hours, whereas the native approach completes synthesis is less than 5 seconds.

Overall we observe that for large formulas the native pipeline offers significantly better scalability than the LTL pipeline (tested with PRISM's built-in translator, SPOT, Rabinizer, and LTL3DRA automata translators).  With an implementation that does not access the hard disk, we expect even better performance of the native pipeline.

\subsubsection{Automata Size}
The DFA generated by the native pipeline is minimal and typically much smaller than the \ltl pipeline's DRA. SPOT and PRISM typically produce similarly sized DRAs. Because we take the product of the input MDP with these automata, we expect the size of the resulting product to be smaller in the native pipeline.  \autoref{table:automata_size} shows the automaton sizes for the various formulas.  \autoref{table:dra_product_size} shows the sizes of the product MDPs for the \ltl and native pipelines respectively.

Interestingly, while the size of the DFA obtained from MONA is roughly half the size of the DRA constructed by PRISM, the final sizes of the products of the MDP and the automata are comparable for both approaches. The number of states, transitions and nondeterministic choices are all measured after reachability analysis is performed. We see time savings in the reachability analysis and product construction phases (both about two times faster), but the final products are similar in size. However, the product from the native pipeline has fewer nondeterministic transitions than the product from the \ltl pipeline. The improvements from this are small relative to the time it takes to construct a large \added{gridworld} MDP \added{(\autoref{fig:times_100x100_F8})}.


Examples of this are shown in \autoref{fig:times_50x50_F8} and \autoref{fig:times_100x100_F8}. Note that the majority of the computation time is spent constructing the MDPs.  For the \ltl pipeline, this construction takes slightly longer time because of the augmented MDP.
It is important to note that the native approach allows us to use larger formulas with these models whereas the \ltl pipeline is limited to \textbf{F10} or smaller.

In summary, we observe that the native pipeline is the most efficient in terms of both runtime and memory. There are two possible drawbacks to the native pipeline. First, it requires reading from disk, which for small formulas negates the advantage in automata construction time. Second, PRISM's built-in automata translation sometimes constructs automata whose BDD representations work better within PRISM than automata from any external tools we tested \added{(\autoref{fig:phil5F5})}. 

\begin{table*}
\centering
\caption{\small Sizes of the automata from \ltl (DRA) and native (DFA) pipelines.}
\label{table:automata_size}
\vspace{-2mm}
\footnotesize
\begin{tabular}{| l || c | c | r | c | c | c | c | c | c | c |}
\hline
             & \, \textbf{F3} \, & \, \textbf{F8} \, & \, \textbf{F9} \, & \, \textbf{F10} \, & \, \textbf{OS} \, & \, \textbf{OL} \, & \, \textbf{D5} \, & \, \textbf{D5C} \, & \, \textbf{Nim} \, & \, \textbf{Counter} \, \\ 
  \hline
  DRA states\, & 19  & 512  & 1,027 & NA & 515  & 1,028 & 67 & NA & 701 & NA\\  
  DFA states\, & 10  & 258  & 514   & 1,026 & 258  & 259 & 33 & 29 & 67 & 130\\
  \hline
\end{tabular}
\end{table*}

\begin{table*}
\vspace{-2mm}
    \centering
    \caption{Sizes of the product MDP in the \ltl pipeline and native pipelines.}
    \label{table:dra_product_size}
    \vspace{-2mm}
    \footnotesize
    \begin{tabular}{ |l || r | r | r | r | r | r | r |}
    \hline
                  & 10x10 \textbf{F3} & 10x10 \textbf{F8} & 10x10 \textbf{F9} & 50x50 \textbf{F8} & 100x100 \textbf{F8} & 10x10 \textbf{F10} & 10x10 \textbf{OS} \\
      \hline
      States (LTL)\;      & 890       & 24,678    & 48,998    & 641,478    & 2,568,978  & NA & 24,678    \\
      States (Native)\;     & 881      & 24,421    & 48,485    & 641,221   & 2,568,721  & 96357 & 24,421   \\  
      \hline

      Transitions (LTL)\; & 17,130    & 475,030   & 942,742   & 13,265,398 & 53,537,298  & NA & 475,030  \\
      Transitions (Native)\; & 16,256    & 450,468   & 893,760   & 12,623,936 & 50,968,336 & 1775424 & 450,368 \\
      \hline

      Choices (LTL)\;     & 4,410     & 122,358   & 242,934   & 3,206,358  & 12,843,858 & NA & 122,358 \\
      Choices (Native)\;     & 3,524     & 97,684    & 193,940   & 2,564,884  & 10,274,884 & 385428 & 97,684 \\

      \hline
      \hline
                    & \textbf{OL} & 10x10 rand & 10x10 halls & 5 Phil \textbf{D5} & 5 Phil \textbf{D5C}  & Nim & Counter \\
      \hline
      States (LTL)\;      & 24,935 & 44,253  & 19,476 & 4,548,220 & NA & 2115 & NA \\  
      States (Native)\;   & 24,519 & 43,740  & 19,187 & 1,476,976 & 93,068 & 404 & 449 \\  
      \hline

      Transitions (LTL)\;   & 475,287 & 804,097 & 310,996 & 46,948,710 & NA &  8343 & NA \\
      Transitions (Native)\;  & 452,172 & 759,856 & 291,536 & 7,891,750 & 494,420 & 1205 & 908\\
      \hline

      Choices (LTL)\;      & 122,615 & 219,209 & 96,220 & 44,059,330 & NA & 6297 & NA \\
      Choices (Native)\;    & 98,076 & 174,960 & 76,748 & 6,895,580 & 437,050 & 808 & 460\\

      \hline

    \end{tabular}%
    
    \vspace{-2mm}
\end{table*}

\section{Conclusion}
\label{sec:conclusion}
We introduced the problem of \ltlf synthesis for probabilistic systems and presented two approaches. The first one is a reduction of \ltlf to \ltl with a corresponding augmentation of the MDP. The second approach uses native tools to construct an automaton and then takes the product of this automaton with the MDP to construct a product MDP that can be used for synthesis through standard techniques. We showed that this native approach offers better scalability than the reduction to \ltl. Our work opens the door to the use of \ltlf synthesis on practical domains such as robotics, cf. \cite{he2019efficient}. Our tool is on GitHub \cite{wells2020github}.

For future work we would like to expand our results to include probability minimization (c.f. \autoref{theorem:ltlsynthesis}). Our native approach extends easily, but there are subtleties that prevent applying the \ltl pipeline to minimization. We are also interested in a fully symbolic methodology, where the automaton is represented symbolically and the product is also taken symbolically.  \replaced{The experiment in \autoref{fig:phil5F5} and our experiences with MoChiBa both indicate that implementing \ltlf methods optimized for PRISM may be more efficient than external tools.}{We would like to further investigate the issue we discovered where external tools quickly find automata with fewer states than PRISM's built-in method, but for which computing maximal accepting end-components in the product MDP takes much longer.} We are also interested in applying the work to robotics domains.


\paragraph{Acknowledgments}
Work on this project by AMW has been supported by NASA 80NSSC17K0162; by ML has been supported by the University of Colorado Boulder Autonomous Systems Interdisciplinary Research Theme and the NSF Centerfor Unmanned Aircraft Systems; by LEK has been supported in part by NSF grant IIS-1830549; and by MYV has been supported in part by NSF grants IIS-1527668, CCF-1704883, IIS-1830549, and an award from the Maryland Procurement Office. The authors would like to thank Gabriel Santos and Joachim Klein from the University of Oxford for their insightful comments and suggestions for efficient implementation of the framework in PRISM. We would like to thank Salomon Sikert from the Technical University of Munich for his aid in running MoChiBa.





\bibliography{refs}

\begin{thebibliography}{10}
\providecommand{\bibitemdeclare}[2]{}
\providecommand{\surnamestart}{}
\providecommand{\surnameend}{}
\providecommand{\urlprefix}{Available at }
\providecommand{\url}[1]{\texttt{#1}}
\providecommand{\href}[2]{\texttt{#2}}
\providecommand{\urlalt}[2]{\href{#1}{#2}}
\providecommand{\doi}[1]{doi:\urlalt{http://dx.doi.org/#1}{#1}}
\providecommand{\bibinfo}[2]{#2}

\bibitemdeclare{inproceedings}{Babiak2015HOA}
\bibitem{Babiak2015HOA}
\bibinfo{author}{Tom{\'a}{\v{s}} \surnamestart Babiak\surnameend},
  \bibinfo{author}{Franti{\v{s}}ek \surnamestart Blahoudek\surnameend},
  \bibinfo{author}{Alexandre \surnamestart Duret-Lutz\surnameend},
  \bibinfo{author}{Joachim \surnamestart Klein\surnameend},
  \bibinfo{author}{Jan \surnamestart K{\v{r}}et{\'i}nsk{\'y}\surnameend},
  \bibinfo{author}{David \surnamestart M{\"u}ller\surnameend},
  \bibinfo{author}{David \surnamestart Parker\surnameend} \&
  \bibinfo{author}{Jan \surnamestart Strej{\v{c}}ek\surnameend}
  (\bibinfo{year}{2015}): \emph{\bibinfo{title}{The Hanoi Omega-Automata
  Format}}.
\newblock In: {\sl \bibinfo{booktitle}{Computer Aided Verification}},
  \bibinfo{publisher}{Springer Intl. Publishing}, \bibinfo{address}{Cham}, pp.
  \bibinfo{pages}{479--486}, \doi{10.1007/978-3-319-21690-4\_31}.

\bibitemdeclare{incollection}{babiak2013ltl3dra}
\bibitem{babiak2013ltl3dra}
\bibinfo{author}{Tom{\'a}{\v{s}} \surnamestart Babiak\surnameend},
  \bibinfo{author}{Franti{\v{s}}ek \surnamestart Blahoudek\surnameend},
  \bibinfo{author}{Mojm{\'\i}r \surnamestart
  K{\v{r}}et{\'\i}nsk{\`y}\surnameend} \& \bibinfo{author}{Jan \surnamestart
  Strej{\v{c}}ek\surnameend} (\bibinfo{year}{2013}):
  \emph{\bibinfo{title}{Effective translation of {LTL} to deterministic Rabin
  automata: Beyond the (F, G)-fragment}}.
\newblock In: {\sl \bibinfo{booktitle}{Automated Technology for Verification
  and Analysis}}, \bibinfo{publisher}{Springer}, pp. \bibinfo{pages}{24--39},
  \doi{10.1007/10722167\_21}.

\bibitemdeclare{inproceedings}{bacchus1996rewarding}
\bibitem{bacchus1996rewarding}
\bibinfo{author}{Fahiem \surnamestart Bacchus\surnameend},
  \bibinfo{author}{Craig \surnamestart Boutilier\surnameend} \&
  \bibinfo{author}{Adam \surnamestart Grove\surnameend} (\bibinfo{year}{1996}):
  \emph{\bibinfo{title}{Rewarding Behaviors}}.
\newblock In: {\sl \bibinfo{booktitle}{Proc. of the Thirteenth Natl. Conf. on
  Artificial Intelligence - Volume 2}}, \bibinfo{series}{AAAI’96},
  \bibinfo{publisher}{AAAI Press}, p. \bibinfo{pages}{1160–1167}.

\bibitemdeclare{inproceedings}{bacchus1997structured}
\bibitem{bacchus1997structured}
\bibinfo{author}{Fahiem \surnamestart Bacchus\surnameend},
  \bibinfo{author}{Craig \surnamestart Boutilier\surnameend} \&
  \bibinfo{author}{Adam \surnamestart Grove\surnameend} (\bibinfo{year}{1997}):
  \emph{\bibinfo{title}{Structured Solution Methods for Non-Markovian Decision
  Processes}}.
\newblock In: {\sl \bibinfo{booktitle}{Proc. of the (AAAI-97) and (IAAI-97)}},
  \bibinfo{series}{AAAI’97/IAAI’97}, \bibinfo{publisher}{AAAI Press}, p.
  \bibinfo{pages}{112–117}.

\bibitemdeclare{book}{baier2008principles}
\bibitem{baier2008principles}
\bibinfo{author}{Christel \surnamestart Baier\surnameend} \&
  \bibinfo{author}{Joost-Pieter \surnamestart Katoen\surnameend}
  (\bibinfo{year}{2008}): \emph{\bibinfo{title}{Principles of Model Checking}}.
\newblock \bibinfo{publisher}{The MIT Press}.

\bibitemdeclare{inproceedings}{brafman2018ltlf}
\bibitem{brafman2018ltlf}
\bibinfo{author}{Ronen~I. \surnamestart Brafman\surnameend},
  \bibinfo{author}{Giuseppe \surnamestart {De Giacomo}\surnameend} \&
  \bibinfo{author}{Fabio \surnamestart Patrizi\surnameend}
  (\bibinfo{year}{2018}): \emph{\bibinfo{title}{LTLf/LDLf Non-Markovian
  Rewards}}.
\newblock In: {\sl \bibinfo{booktitle}{Proc. of (AAAI-18), (IAAI-18), and
  (EAAI-18), New Orleans, Louisiana, USA, February 2-7, 2018}},
  \bibinfo{publisher}{{AAAI} Press}, pp. \bibinfo{pages}{1771--1778}.

\bibitemdeclare{inproceedings}{camacho2017nonmarkovian}
\bibitem{camacho2017nonmarkovian}
\bibinfo{author}{Alberto \surnamestart Camacho\surnameend},
  \bibinfo{author}{Oscar \surnamestart Chen\surnameend}, \bibinfo{author}{Scott
  \surnamestart Sanner\surnameend} \& \bibinfo{author}{Sheila~A. \surnamestart
  McIlraith\surnameend} (\bibinfo{year}{2017}):
  \emph{\bibinfo{title}{Non-Markovian Rewards Expressed in LTL: Guiding Search
  Via Reward Shaping}}.
\newblock In: {\sl \bibinfo{booktitle}{SOCS}}.

\bibitemdeclare{article}{Courcoubetis1995complexity}
\bibitem{Courcoubetis1995complexity}
\bibinfo{author}{Costas \surnamestart Courcoubetis\surnameend} \&
  \bibinfo{author}{Mihalis \surnamestart Yannakakis\surnameend}
  (\bibinfo{year}{1995}): \emph{\bibinfo{title}{The Complexity of Probabilistic
  Verification}}.
\newblock {\sl \bibinfo{journal}{J. ACM}}
  \bibinfo{volume}{42}(\bibinfo{number}{4}), pp. \bibinfo{pages}{857--907},
  \doi{10.1145/210332.210339}.

\bibitemdeclare{phdthesis}{de1997formal}
\bibitem{de1997formal}
\bibinfo{author}{Luca \surnamestart De~Alfaro\surnameend}
  (\bibinfo{year}{1998}): \emph{\bibinfo{title}{Formal Verification of
  Probabilistic Systems}}.
\newblock Ph.D. thesis, \bibinfo{address}{Stanford, CA, USA}.

\bibitemdeclare{inproceedings}{de2013linear}
\bibitem{de2013linear}
\bibinfo{author}{Giuseppe \surnamestart De~Giacomo\surnameend} \&
  \bibinfo{author}{Moshe~Y \surnamestart Vardi\surnameend}
  (\bibinfo{year}{2013}): \emph{\bibinfo{title}{Linear Temporal Logic and
  Linear Dynamic Logic on Finite Traces.}}
\newblock In: {\sl \bibinfo{booktitle}{Intl. Joint Conf. on Artificial
  Intelligence (IJCAI)}}, \bibinfo{volume}{13}, pp. \bibinfo{pages}{854--860}.

\bibitemdeclare{inproceedings}{de2015synthesis}
\bibitem{de2015synthesis}
\bibinfo{author}{Giuseppe \surnamestart De~Giacomo\surnameend} \&
  \bibinfo{author}{Moshe~Y \surnamestart Vardi\surnameend}
  (\bibinfo{year}{2015}): \emph{\bibinfo{title}{Synthesis for {LTL} and {LDL}
  on Finite Traces}}.
\newblock In: {\sl \bibinfo{booktitle}{Intl. Joint Conf. on Artificial
  Intelligence (IJCAI)}}, \bibinfo{volume}{15}, pp.
  \bibinfo{pages}{1558--1564}.

\bibitemdeclare{inproceedings}{duret2016spot}
\bibitem{duret2016spot}
\bibinfo{author}{Alexandre \surnamestart Duret-Lutz\surnameend},
  \bibinfo{author}{Alexandre \surnamestart Lewkowicz\surnameend},
  \bibinfo{author}{Amaury \surnamestart Fauchille\surnameend},
  \bibinfo{author}{Thibaud \surnamestart Michaud\surnameend},
  \bibinfo{author}{{\'E}tienne \surnamestart Renault\surnameend} \&
  \bibinfo{author}{Laurent \surnamestart Xu\surnameend} (\bibinfo{year}{2016}):
  \emph{\bibinfo{title}{Spot 2.0 --- a framework for {LTL} and
  $\omega$-automata manipulation}}.
\newblock In: {\sl \bibinfo{booktitle}{Automated Technology for Verification
  and Analysis (ATVA'16)}}, {\sl \bibinfo{series}{Lecture Notes in Computer
  Science}} \bibinfo{volume}{9938}, \bibinfo{publisher}{Springer}, pp.
  \bibinfo{pages}{122--129}, \doi{10.1007/3-540-60915-6\_6}.

\bibitemdeclare{article}{He:RAL:2018}
\bibitem{He:RAL:2018}
\bibinfo{author}{Keliang \surnamestart He\surnameend}, \bibinfo{author}{Morteza
  \surnamestart Lahijanian\surnameend}, \bibinfo{author}{Lydia~E \surnamestart
  Kavraki\surnameend} \& \bibinfo{author}{Moshe~Y \surnamestart
  Vardi\surnameend} (\bibinfo{year}{2018}): \emph{\bibinfo{title}{Automated
  Abstraction of Manipulation Domains for Cost-Based Reactive Synthesis}}.
\newblock {\sl \bibinfo{journal}{IEEE Robotics and Automation Letters}}
  \bibinfo{volume}{4}(\bibinfo{number}{2}), pp. \bibinfo{pages}{285--292},
  \doi{10.1109/LRA.2018.2889191}.

\bibitemdeclare{inproceedings}{he2019efficient}
\bibitem{he2019efficient}
\bibinfo{author}{Keliang \surnamestart He\surnameend},
  \bibinfo{author}{Andrew~M \surnamestart Wells\surnameend},
  \bibinfo{author}{Lydia~E \surnamestart Kavraki\surnameend} \&
  \bibinfo{author}{Moshe~Y \surnamestart Vardi\surnameend}
  (\bibinfo{year}{2019}): \emph{\bibinfo{title}{Efficient Symbolic Reactive
  Synthesis for Finite-Horizon Tasks}}.
\newblock In: {\sl \bibinfo{booktitle}{2019 Intl. Conf. on Robotics and
  Automation (ICRA)}}, \bibinfo{organization}{IEEE}, pp.
  \bibinfo{pages}{8993--8999}, \doi{10.1109/ICRA.2019.8794170}.

\bibitemdeclare{inproceedings}{He:IROS:2017}
\bibitem{He:IROS:2017}
\bibinfo{author}{Keling \surnamestart He\surnameend}, \bibinfo{author}{Morteza
  \surnamestart Lahijanian\surnameend}, \bibinfo{author}{Lydia~E. \surnamestart
  Kavraki\surnameend} \& \bibinfo{author}{Moshe~Y. \surnamestart
  Vardi\surnameend} (\bibinfo{year}{2017}): \emph{\bibinfo{title}{Reactive
  Synthesis for Finite Tasks Under Resource Constraints}}.
\newblock In: {\sl \bibinfo{booktitle}{Int. Conf. on Intelligent Robots and
  Systems (IROS)}}, \bibinfo{publisher}{IEEE}, \bibinfo{address}{Vancouver, BC,
  Canada}, pp. \bibinfo{pages}{5326--5332}, \doi{10.1109/IROS.2017.8206426}.

\bibitemdeclare{inproceedings}{henriksen1995mona}
\bibitem{henriksen1995mona}
\bibinfo{author}{Jesper~G \surnamestart Henriksen\surnameend},
  \bibinfo{author}{Jakob \surnamestart Jensen\surnameend},
  \bibinfo{author}{Michael \surnamestart J{\o}rgensen\surnameend},
  \bibinfo{author}{Nils \surnamestart Klarlund\surnameend},
  \bibinfo{author}{Robert \surnamestart Paige\surnameend},
  \bibinfo{author}{Theis \surnamestart Rauhe\surnameend} \&
  \bibinfo{author}{Anders \surnamestart Sandholm\surnameend}
  (\bibinfo{year}{1995}): \emph{\bibinfo{title}{Mona: Monadic second-order
  logic in practice}}.
\newblock In: {\sl \bibinfo{booktitle}{Intl. Workshop on Tools and Algorithms
  for the Construction and Analysis of Systems}},
  \bibinfo{organization}{Springer}, pp. \bibinfo{pages}{89--110},
  \doi{10.1007/3-540-60630-0\_5}.

\bibitemdeclare{inproceedings}{kretinsky2018rabinizer}
\bibitem{kretinsky2018rabinizer}
\bibinfo{author}{Jan \surnamestart K{\v{r}}et{\'i}nsk{\'y}\surnameend},
  \bibinfo{author}{Tobias \surnamestart Meggendorfer\surnameend},
  \bibinfo{author}{Salomon \surnamestart Sickert\surnameend} \&
  \bibinfo{author}{Christopher \surnamestart Ziegler\surnameend}
  (\bibinfo{year}{2018}): \emph{\bibinfo{title}{Rabinizer 4: From LTL to Your
  Favourite Deterministic Automaton}}.
\newblock In: {\sl \bibinfo{booktitle}{Computer Aided Verification}},
  \bibinfo{publisher}{Springer International Publishing},
  \bibinfo{address}{Cham}, pp. \bibinfo{pages}{567--577},
  \doi{10.1007/978-3-319-46520-3\_9}.

\bibitemdeclare{inproceedings}{kwiatowska2011prism}
\bibitem{kwiatowska2011prism}
\bibinfo{author}{M.~\surnamestart Kwiatkowska\surnameend},
  \bibinfo{author}{G.~\surnamestart Norman\surnameend} \&
  \bibinfo{author}{D.~\surnamestart Parker\surnameend} (\bibinfo{year}{2011}):
  \emph{\bibinfo{title}{{PRISM} 4.0: Verification of Probabilistic Real-time
  Systems}}.
\newblock In: {\sl \bibinfo{booktitle}{Proc. 23rd Intl. Conf. on Computer Aided
  Verification (CAV'11)}}, {\sl \bibinfo{series}{LNCS}} \bibinfo{volume}{6806},
  \bibinfo{publisher}{Springer}, pp. \bibinfo{pages}{585--591},
  \doi{10.1007/3-540-45657-0\_17}.

\bibitemdeclare{inproceedings}{Li:AAAI:2019}
\bibitem{Li:AAAI:2019}
\bibinfo{author}{Jianwen \surnamestart Li\surnameend},
  \bibinfo{author}{Kristin~Y \surnamestart Rozier\surnameend},
  \bibinfo{author}{Geguang \surnamestart Pu\surnameend},
  \bibinfo{author}{Yueling \surnamestart Zhang\surnameend} \&
  \bibinfo{author}{Moshe~Y \surnamestart Vardi\surnameend}
  (\bibinfo{year}{2019}): \emph{\bibinfo{title}{{SAT}-Based Explicit {LTLf}
  Satisfiability Checking}}.
\newblock In: {\sl \bibinfo{booktitle}{Proc. of the AAAI Conf. on Artificial
  Intelligence}}, \bibinfo{volume}{33}, pp. \bibinfo{pages}{2946--2953},
  \doi{10.1007/978-3-030-25543-5\_1}.

\bibitemdeclare{inproceedings}{tabajara2019partitioning}
\bibitem{tabajara2019partitioning}
\bibinfo{author}{Lucas \surnamestart Martinelli~Tabajara\surnameend} \&
  \bibinfo{author}{Moshe \surnamestart Y.~Vardi\surnameend}
  (\bibinfo{year}{2019}): \emph{\bibinfo{title}{Partitioning Techniques in LTLf
  Synthesis}}.
\newblock In: {\sl \bibinfo{booktitle}{Proceedings of the Twenty-Eighth
  International Joint Conference on Artificial Intelligence, {IJCAI-19}}},
  \bibinfo{publisher}{International Joint Conferences on Artificial
  Intelligence Organization}, pp. \bibinfo{pages}{5599--5606},
  \doi{10.24963/ijcai.2019/777}.

\bibitemdeclare{inproceedings}{pnueli1977temporal}
\bibitem{pnueli1977temporal}
\bibinfo{author}{Amir \surnamestart Pnueli\surnameend} (\bibinfo{year}{1977}):
  \emph{\bibinfo{title}{The temporal logic of programs}}.
\newblock In: {\sl \bibinfo{booktitle}{Foundations of Computer Science, 1977.,
  18th Annual Symposium on}}, \bibinfo{organization}{IEEE}, pp.
  \bibinfo{pages}{46--57}.

\bibitemdeclare{inproceedings}{sickert2016mochiba}
\bibitem{sickert2016mochiba}
\bibinfo{author}{Salomon \surnamestart Sickert\surnameend} \&
  \bibinfo{author}{Jan \surnamestart Kret{\'{\i}}nsk{\'{y}}\surnameend}
  (\bibinfo{year}{2016}): \emph{\bibinfo{title}{MoChiBA: Probabilistic {LTL}
  Model Checking Using Limit-Deterministic B{\"{u}}chi Automata}}.
\newblock In: {\sl \bibinfo{booktitle}{Automated Technology for Verification
  and Analysis (ATVA'16)}}, {\sl \bibinfo{series}{Lecture Notes in Computer
  Science}} \bibinfo{volume}{9938}, pp. \bibinfo{pages}{130--137},
  \doi{10.1007/978-3-319-45856-4}.

\bibitemdeclare{article}{thiebaux2006decision}
\bibitem{thiebaux2006decision}
\bibinfo{author}{Sylvie \surnamestart Thi\'{e}baux\surnameend},
  \bibinfo{author}{Charles \surnamestart Gretton\surnameend},
  \bibinfo{author}{John \surnamestart Slaney\surnameend},
  \bibinfo{author}{David \surnamestart Price\surnameend} \&
  \bibinfo{author}{Froduald \surnamestart Kabanza\surnameend}
  (\bibinfo{year}{2006}): \emph{\bibinfo{title}{Decision-Theoretic Planning
  with Non-Markovian Rewards}}.
\newblock {\sl \bibinfo{journal}{J. Artif. Int. Res.}}
  \bibinfo{volume}{25}(\bibinfo{number}{1}), p. \bibinfo{pages}{17–74},
  \doi{10.1613/jair.1676}.

\bibitemdeclare{inproceedings}{Var85b}
\bibitem{Var85b}
\bibinfo{author}{Moshe~Y. \surnamestart Vardi\surnameend}
  (\bibinfo{year}{1985}): \emph{\bibinfo{title}{Automatic Verification of
  Probabilistic Concurrent Finite State Programs}}.
\newblock In: {\sl \bibinfo{booktitle}{Proceedings of the 26th Annual Symposium
  on Foundations of Computer Science}}, \bibinfo{series}{SFCS '85},
  \bibinfo{publisher}{IEEE Computer Society}, \bibinfo{address}{USA}, p.
  \bibinfo{pages}{327–338}, \doi{10.1109/SFCS.1985.12}.

\bibitemdeclare{misc}{wells2020github}
\bibitem{wells2020github}
\bibinfo{author}{Andrew~M. \surnamestart Wells\surnameend}:
  \emph{\bibinfo{title}{LTLf for PRISM}}.
\newblock \urlprefix\url{https://github.com/andrewmw94/ltlf_prism}.

\bibitemdeclare{inproceedings}{wells2019ltlf}
\bibitem{wells2019ltlf}
\bibinfo{author}{Andrew~M \surnamestart Wells\surnameend},
  \bibinfo{author}{Morteza \surnamestart Lahijanian\surnameend},
  \bibinfo{author}{Lydia~E \surnamestart Kavraki\surnameend} \&
  \bibinfo{author}{Moshe~Y \surnamestart Vardi\surnameend}:
  \emph{\bibinfo{title}{{LTLf} Synthesis on Probabilistic Systems (Online
  version)}}.
\newblock
  \urlprefix\url{http://www.andrewmwells.com/ltlf-synthesis-on-probabilistic-systems/}.

\bibitemdeclare{inproceedings}{zhu2017symbolic}
\bibitem{zhu2017symbolic}
\bibinfo{author}{Shufang \surnamestart Zhu\surnameend},
  \bibinfo{author}{Lucas~M \surnamestart Tabajara\surnameend},
  \bibinfo{author}{Jianwen \surnamestart Li\surnameend},
  \bibinfo{author}{Geguang \surnamestart Pu\surnameend} \&
  \bibinfo{author}{Moshe~Y \surnamestart Vardi\surnameend}
  (\bibinfo{year}{2017}): \emph{\bibinfo{title}{Symbolic {LTLf} synthesis}}.
\newblock In: {\sl \bibinfo{booktitle}{Proc. of the 26th Intl. Joint Conf. on
  Artificial Intelligence}}, \bibinfo{organization}{AAAI Press}, pp.
  \bibinfo{pages}{1362--1369}.

\end{thebibliography}



\end{document}